\newtheorem{theorem}{Theorem}
\def\expandafter\normalsize\expandafter{%
    \normalsize%
    \setlength\abovedisplayskip{0pt}%
    \setlength\belowdisplayskip{8pt}%
    \setlength\abovedisplayshortskip{-8pt}%
    \setlength\belowdisplayshortskip{2pt}%
}
\acrodef{TX}{transmitter}
\acrodef{MC}{molecular communication}
\acrodef{TDD}{targeted drug delivery}
\acrodef{ODE}{ordinary differential equation}
\acrodef{SM}{signaling molecule}
\acrodef{FDM}{finite difference method}
\acrodef{RX}{receiver}
\acrodef{PR}{proteorhodopsin}
\acrodef{ATP}{adenosine triphosphate}
\acrodef{ND}{nanodevice}
\acrodef{EM}{electromagnetic}
\acrodef{BSK}{binary shift keying}
\acrodef{IoBNT}{Internet of bionano-things}
\acrodef{LED}{light-emitting diode}
\acrodef{SVS}{single vesicle system}
\acrodef{MVS}{multiple vesicle system}
\newcommand{\Hplus}[0]{\mathrm{H^{+}}}
\newcommand{\Sub}[0]{\mathrm{S}}
\newcommand{\I}[0]{\mathrm{I}}
\renewcommand{\L}[0]{\mathrm{B}}
\newcommand{\NA}[0]{\mathrm{N}_{\mathrm{A}}}
\newcommand{\Chin}[1]{C^{\mathrm{H^{+}}}_{\mathrm{in}}(#1)}
\newcommand{\Chout}[1]{C^{\mathrm{H^{+}}}_{\mathrm{out}}(#1)}
\newcommand{\ChinZERO}[0]{C^{\mathrm{H^{+}}}_{\mathrm{in, 0}}}
\newcommand{\Csin}[1]{C^{\mathrm{S}}_{\mathrm{in}}(#1)}
\newcommand{\CsinZERO}[0]{C^{\mathrm{S}}_{\mathrm{in, 0}}}
\newcommand{\Csout}[1]{C^{\mathrm{S}}_{\mathrm{out}}(#1)}
\newcommand{\Csouttot}[1]{C^{\mathrm{S}}_{\mathrm{out,tot}}(#1)}
\newcommand{\ChoutZERO}[0]{C^{\mathrm{H^{+}}}_{\mathrm{out, 0}}}
\newcommand{\ChoutM}[2]{C^{\mathrm{H^{+}}}_{\mathrm{out},#1}(#2)}
\newcommand{\ChoutTot}[1]{C^{\mathrm{H^{+}}}_{\mathrm{out,tot}}(#1)}
\newcommand{\CsinM}[2]{C^{\mathrm{S}}_{\mathrm{in},#1}(#2)}
\newcommand{\CsoutM}[2]{C^{\mathrm{S}}_{\mathrm{out},#1}(#2)}
\newcommand{\CsoutQ}[2]{C^{\mathrm{S}}_{\mathrm{out,tot},#1}(#2)}
\newcommand{\CZERO}[0]{C_{0}} %
\newcommand{\CiinM}[2]{C^{\mathrm{I}}_{\mathrm{in},#1}(#2)}
\newcommand{\CioutM}[2]{C^{\mathrm{I}}_{\mathrm{out},#1}(#2)}
\newcommand{\ihp}[1]{i^{\mathrm{H^{+}}}_{\mathrm{E}}(#1)}
\newcommand{\ihl}[1]{i^{\mathrm{H^{+}}}_{\mathrm{L}}(#1)}
\newcommand{\ihsym}[1]{i^{\mathrm{H^{+}}}_{\mathrm{R}}(#1)}
\newcommand{\issym}[1]{i^{\mathrm{S}}_{\mathrm{R}}(#1)}
\newcommand{\jhla}[0]{j^{a}_{\mathrm{L}}}
\newcommand{\jhlb}[0]{j^{b}_{\mathrm{L}}}
\newcommand{\jhpa}[0]{j^{a}_{\mathrm{P}}}
\newcommand{\jhpb}[0]{j^{b}_{\mathrm{P}}}
\newcommand{\jhsymb}[0]{j^{b}_{\mathrm{Sym}}(t)}
\newcommand{\ghp}[0]{\gamma_{\mathrm{P}}}
\newcommand{\ghl}[0]{\gamma_{\mathrm{L}}}
\newcommand{\ghsym}[0]{\gamma^{\mathrm{H^{+}}}_{\mathrm{Sym}}}
\newcommand{\gssym}[0]{\gamma^{\mathrm{S}}_{\mathrm{Sym}}}
\newcommand{\gssymT}[1]{\gamma^{\mathrm{S}}_{\mathrm{Sym}}(#1)}
\newcommand{\ghsymT}[1]{\gamma^{\mathrm{H^{+}}}_{\mathrm{Sym}}(#1)}
\newcommand{\ghlM}[1]{\gamma_{\mathrm{L},#1}}
\newcommand{\ghlHatM}[1]{\hat{\gamma}_{\mathrm{L},#1}}
\newcommand{\ghpHat}[0]{\hat{\gamma}_{\mathrm{P}}}
\newcommand{\ghlHat}[0]{\hat{\gamma}_{\mathrm{L}}}
\newcommand{\ghlHatMean}[0]{\bar{\hat{\gamma}}_{\mathrm{L}}}
\newcommand{\gssymHat}[0]{\hat{\gamma}^{\mathrm{S}}_{\mathrm{Sym}}}
\newcommand{\fSym}[0]{\nu_{\mathrm{Sym}}}
\newcommand{\Nh}[0]{N^{\mathrm{H^{+}}}}
\newcommand{\NsM}[1]{N^{\mathrm{S}}_{#1}}
\newcommand{\np}[0]{n_{\mathrm{P}}} %
\newcommand{\npMean}[0]{\bar{n}_{\mathrm{P}}} %
\newcommand{\npM}[1]{n_{\mathrm{P},#1}} %
\newcommand{\ntotM}[1]{n_{\mathrm{tot},#1}} %
\newcommand{\ntot}[0]{n_{\mathrm{tot}}}
\newcommand{\thresh}[0]{\xi}
\newcommand{\Km}[0]{\mathrm{K}_{\mathrm{m}}}
\newcommand{\nsym}[0]{n_{\mathrm{Sym}}}
\newcommand{\nsymMean}[0]{\bar{n}_{\mathrm{Sym}}}
\newcommand{\nsymM}[1]{n_{\mathrm{Sym},#1}}
\newcommand{\din}[0]{d_{\mathrm{in}}}
\newcommand{\dinMean}[0]{\bar{d}_{\mathrm{in}}}
\newcommand{\dinM}[1]{d_{\mathrm{in},#1}}
\newcommand{\dmem}[0]{d_{\mathrm{mem}}}
\newcommand{\Vout}[0]{V_{\mathrm{out}}}
\newcommand{\VoutM}[1]{V_{\mathrm{out},#1}}
\newcommand{\Vin}[0]{V_{\mathrm{in}}}
\newcommand{\VinM}[1]{V_{\mathrm{in},#1}}
\newcommand{\VIN}[0]{\mathcal{V}_{\mathrm{in}}}
\newcommand{\VOUT}[0]{\mathcal{V}_{\mathrm{out}}}
\newcommand{\VOUTtot}[0]{\mathcal{V}_{\mathrm{out,tot}}}
\newcommand{\Vouttot}[0]{V_{\mathrm{out,tot}}}
\newcommand{\VINM}[1]{\mathcal{V}_{\mathrm{in},#1}}
\newcommand{\VOUTM}[1]{\mathcal{V}_{\mathrm{out},#1}}
\newcommand{\step}[0]{\mathrm{\Delta}}
\newcommand{\attSec}[1]{\vartheta_{\mathrm{buf}}(#1)}
\newcommand{\ChinSwitch}[0]{C^{\mathrm{H^{+}}}_{\thresh}}
\newcommand{\ChoutSwitch}[0]{C^{\mathrm{H^{+}}}_{\mathrm{out},\thresh}}
\newcommand{\CiinSwitch}[1]{C^{\I}_{\thresh,m}}
\newcommand{\ChinEq}[0]{C^{\mathrm{H^{+}}}_{\mathrm{in,eq}}}
\newcommand{\ChinStart}[1]{C^{\mathrm{H^{+}}}_{\mathrm{in},0}(#1)}
\newcommand{\CsinStart}[1]{C^{\mathrm{S}}_{\mathrm{in},0}(#1)}
\newcommand{\tji}[2]{t^{(#1)}_{#2}}
\newcommand{\tj}[1]{t^{(#1)}}
\newcommand{\tsp}[1]{\tji{1}{#1}}
\newcommand{\tspNoi}[0]{\tj{1}}
\newcommand{\tssym}[1]{\tji{2}{#1}}
\newcommand{\tssymNoi}[0]{\tj{2}}
\newcommand{\tep}[1]{\tji{3}{#1}}
\newcommand{\tepNoi}[0]{\tj{3}}
\newcommand{\tesym}[1]{\tji{4}{#1}}
\newcommand{\tesymNoi}[0]{\tj{4}}
\newcommand{\tsymX}[2]{t^{(#1)}_{#2}}
\newcommand{\tSecStart}[1]{\tau(#1)}
\newcommand{\lamW}[1]{\mathrm{W}\left \{ #1 \right \}}
\newcommand{\Chplus}[0]{C^{\Hplus}}
\newcommand{\ka}[0]{k_{\mathrm{D}}}
\newcommand{\kPlus}[0]{k_{\mathrm{+}}}
\newcommand{\kMinus}[0]{k_{\mathrm{-}}}
\newcommand{\Ci}[0]{C^{\I}}
\newcommand{\Cl}[0]{C^{\L}}
\newcommand{\Cil}[0]{C^{\I \L}}
\newcommand{\der}[0]{\mathrm{d}}
\newcommand{\iieM}[2]{i^{\mathrm{I}}_{\mathrm{E},#1}(#2)}
\newcommand{\iirM}[2]{i^{\mathrm{I}}_{\mathrm{R},#1}(#2)}
\newcommand{\isr}[1]{i^{\mathrm{S}}_{\mathrm{R}}(#1)}
\newcommand{\isrM}[2]{i^{\mathrm{S}}_{\mathrm{R},#1}(#2)}
\newcommand{\iilM}[2]{i^{\mathrm{I}}_{\mathrm{L},#1}(#2)}
\newcommand{\iieNotM}[1]{i^{\mathrm{I}}_{\mathrm{E},#1}}
\newcommand{\iirNotM}[1]{i^{\mathrm{I}}_{\mathrm{R},#1}}
\newcommand{\isrNotM}[1]{i^{\mathrm{S}}_{\mathrm{R},#1}}
\newcommand{\iilNotM}[1]{i^{\mathrm{I}}_{\mathrm{L},#1}}
\newcommand{\NiM}[1]{N^{\mathrm{I}}_{#1}}
\newcommand{\nves}[0]{n_{\mathrm{ves}}}
\newcommand{\nEx}[0]{n_{\mathrm{exp}}}
\newcommand{\nMod}[0]{n_{\mathrm{mod}}}
\newcommand{\Truncnormal}[1]{\mathcal{TN}(#1)}
\newcommand{\Lognormal}[1]{\mathcal{LN}(#1)}
\newcommand{\Binom}[0]{\mathrm{Binom}}
\newcommand{\probPump}[0]{p_\mathrm{P}}
\newcommand*{\org@overidelabel}{}
\let\org@overridelabel\@verridelabel
  \renewcommand*{\@verridelabel}[1]{%
    \@bsphack
    \protected@write\@auxout{}{\string\AC@undonewlabel{#1@cref}}%
    \org@overridelabel{#1}%
    \@esphack
  }%
  \renewcommand*{\@verridelabel}[1]{%
    \@bsphack
    \protected@write\@auxout{}{\string\undonewlabel{#1@cref}}%
    \org@overridelabel{#1}%
    \@esphack
  }%
\long\def\@makecaption#1#2{\ifx\@captype\@IEEEtablestring%
    \footnotesize\begin{center}{\normalfont\footnotesize #1}\\
        {\normalfont\footnotesize\scshape #2}\end{center}%
    \@IEEEtablecaptionsepspace
    \else
    \@IEEEfigurecaptionsepspace
    \setbox\@tempboxa\hbox{\normalfont\footnotesize {#1.}~~ #2}%
    \ifdim \wd\@tempboxa >\hsize%
    \setbox\@tempboxa\hbox{\normalfont\footnotesize {#1.}~~ }%
    \parbox[t]{\hsize}{\normalfont\footnotesize \noindent\unhbox\@tempboxa#2}%
    \else
    \hbox to\hsize{\normalfont\footnotesize\hfil\box\@tempboxa\hfil}\fi\fi}
\begin{document}

\hyphenation{co-poly-mer-somes}

\bstctlcite{BSTcontrol}

\title{Practical Transmitters for Molecular Communication: Functionalized Nanodevices Employing Cooperative Transmembrane Transport Proteins\vspace*{-0.5cm}}
\author{\IEEEauthorblockN{Teena tom Dieck, Lukas Brand, Lea Erbacher, Daniela Wegner, Sebastian Lotter, Kathrin Castiglione, Robert Schober, and Maximilian Sch\"afer 
}
\vspace*{-1.3cm}}

\IEEEpubid{0000--0000/00\$00.00~\copyright~2021 IEEE}

\maketitle

\nocite{tomDieck2024}

\begin{abstract}
\let\thefootnote\relax\footnotetext{This manuscript has been presented in part at the 2024 ACM International Conference on Nanoscale Computing and Communication \cite{tomDieck2024}.}
This paper introduces a novel optically controllable \ac{MC} \ac{TX} design based on vesicular \acp{ND}. The \acp{ND} are functionalized for the controlled release of \aclp{SM} via transmembrane proteins. The proposed design contributes to overcoming the current barrier between \ac{MC} theory and practical implementation, as all components of the system are chemically realizable. The \acp{ND} possess an optical-to-chemical conversion capability, therefore, the proposed \acp{ND} can be employed as externally controllable \acp{TX} in various \ac{MC} systems. The proposed \ac{ND} design comprises two cooperating modules, namely an energizing module and a release module, and, depending on the specific choices for the modules, allows for the release of different types of \aclp{SM}. After introducing the general system model for the proposed realistic \ac{TX} design, we provide a detailed mathematical analysis of a specific \ac{TX} realization. In particular, we derive both an exact and a closed-form approximate analytical solution for the concentration of the released \aclp{SM} and validate our results by comparison with a numerical solution. Moreover, we model the impact of a buffering medium, which is typically present in liquid environments, e.g., in experimental settings or in in-body applications. This allows the evaluation of the feasibility of our proposed \ac{TX} design in practical chemical implementations. We consider various forms of parameter randomness occurring during vesicle synthesis, i.e., deviations which are unavoidable during experiments. We show that considering random distributions of the parameter values, such as the \ac{ND} size, the number of incorporated proteins on the vesicle surface, and the vesicle membrane permeability, is crucial for an adequate kinetic analysis of the system. Moreover, the proposed analytical and closed-form models for multiple and single \ac{ND} scenarios facilitate system parameter optimization, which can accelerate the realization of the proposed \ac{MC} \ac{TX} in future wet lab experiments.
\end{abstract}

\acresetall

\section{Introduction}

\begin{figure}
    \centering
    \includegraphics[width=0.85\textwidth]{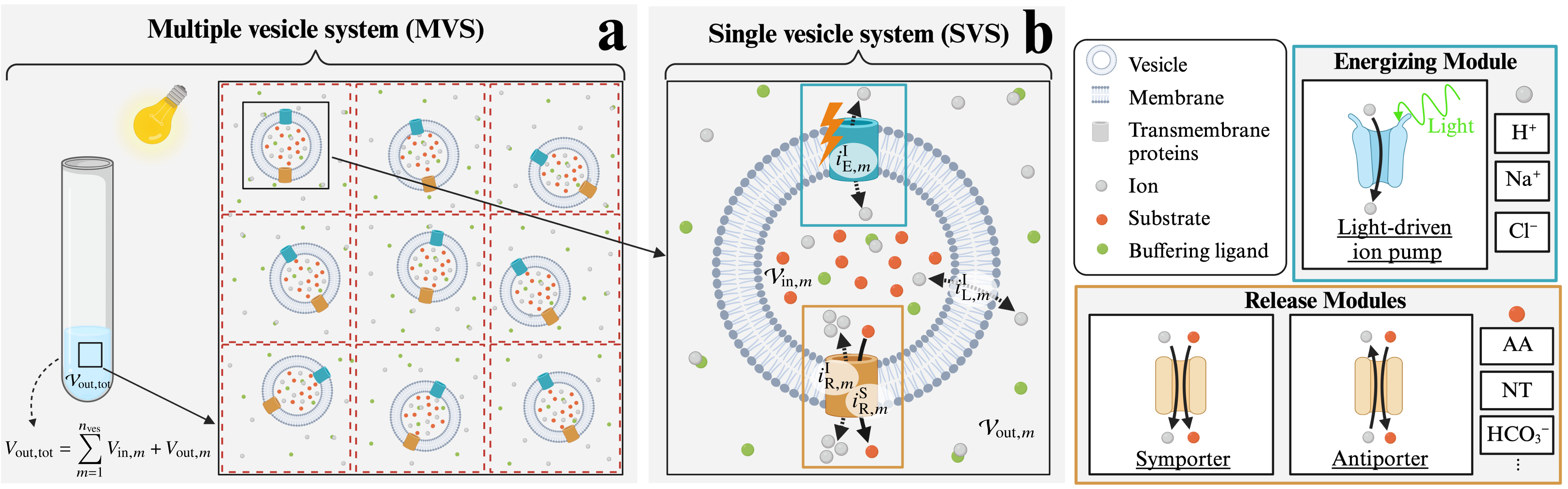}
    \vspace*{-0.4cm}
    \caption{\small General system model for the proposed \acs{ND}-based \acs{TX}. (a): Experimental set up of an \acs{MVS} where $\nves$ \acsp{ND} are contained in volume $\VOUTtot$. (b): Detailed view of \acs{SVS} $m \in \{1,2,\ldots,\nves\}$ in the \acs{MVS}. Variables $\iieNotM{m}$, $\iilNotM{m}$, $\iirNotM{m}$, and $\isrNotM{m}$ denote the flux of ion $\I$ caused by the energizing module, the leakage flux of $\I$, and the flux of $\I$ and substrate $\Sub$ caused by the release module in vesicle $m$, respectively. The possible flux directions between the intravesicular volume, $\VIN$, and the extravesicular volume, $\VOUT$, are indicated by arrows. The complex that the buffering ligand $\L$ may form with $\I$ is also depicted (complex of green and grey dots). The energizing and release module in each vesicle consists of \emph{multiple} transmembrane transport proteins. Examples of suitable proteins and their transported cargo molecules are shown on the right-hand side. Abbreviations: AA = amino acid, NT = neurotransmitter. Created with BioRender.com.}
    \label{fig:sys_model}
\end{figure}

\Ac{MC} is a burgeoning research area in the field of communication engineering and focuses on the development of communication systems that use \acp{SM} as information carriers \cite{Nakano2013,Farsad2016}. Diverging from conventional \ac{EM} wave--based communication, \ac{MC} has emerged as a novel paradigm, with the potential to facilitate communication in scenarios where \ac{EM} wave--based methods face limitations, such as in liquid environments or at nanoscale. Therefore, \ac{MC} offers numerous revolutionary prospective applications including health monitoring, \ac{TDD}, or the detection of toxic agents \cite{Nakano2013,Felicetti2016,Nakano2012}. 
The successful deployment of \ac{MC} systems largely depends on the development of \emph{practically realizable} and \emph{controllable} \ac{TX} designs that ideally support various types of \acp{SM} and are applicable in different environments (e.g., flow- and diffusion-dominated) to enlarge the range of future applications. However, the majority of research on \acp{TX} in \ac{MC} is theoretical and often relies on unrealistic assumptions such as perfect controlability of the \ac{TX} release dynamics or instantaneous release of \acp{SM} \cite{Noel2016,Meng2014}. Recently, more realistic \ac{TX} models have been considered. 
In \cite{Ahmadzadeh2022}, a molecule harvesting \ac{TX}, which is capable of (re-)uptake and release of \acp{SM}, was proposed. In \cite{Schaefer2022}, the controlled release of \acp{SM} by pH-driven membrane permeability switches was considered. The authors of \cite{Huang2022} introduced a modified cell--based membrane fusion \ac{TX} design, which allows for the release of \acp{SM} by the fusion of \ac{SM}-carrying vesicles and the membrane of the \ac{TX}. Additionally, a microscale flow-based \ac{TX} design based on chemical reactions taking place in a microfluidic chip was introduced in \cite{Bi2020}. Similarly, in \cite{Bolhassan2024}, a microfluidic \ac{TX} utilizing hydrodynamic gating was proposed. Nevertheless, there is a lack of externally controllable nanoscale \ac{TX} designs that can also operate in absence of flow. For this scenario, vesicle-based \ac{TX} designs leveraging transmembrane transport proteins are suitable. Initial studies for such designs have been conducted in the \ac{MC} literature: In \cite{Arjmandi2016}, the release of ions from a \ac{ND} via ligand- and voltage-gated ion channels was considered and, in \cite{Grebenstein2019}, an \ac{MC} testbed was presented that utilized bacteria expressing light-driven ion pumps as externally controllable \acp{TX} for the release of protons, showing that \ac{ND}-based optically controllable \acp{TX} are feasible in practice. While the authors of \cite{Arjmandi2016} and \cite{Grebenstein2019} focused on bit transmission as use case for \ac{MC}, where ions are suitable as \acp{SM}, other applications such as \ac{TDD} may require more complex \acp{SM} (e.g., drug molecules). Such \acp{SM} often cannot be transported by externally controllable transport proteins \cite{Soldner2020}. Therefore, to enable the release of these \acp{SM}, the cooperation of different transport proteins is required, as reported in \cite{Soldner2020,Stauffer2021}. Thus, in the proposed \ac{TX} design, one type of protein operates as \textit{energizing module} powering the second type of protein, which serves as \textit{release module} for \acp{SM}. The energizing module facilitates the conversion of external light energy supplied by a \ac{LED} to a chemical concentration gradient between the intra- and extravesicular space using light-driven ion pumps. This gradient then drives the release module which enables the release of \acp{SM} using ion/\ac{SM} co-transporters. Some experimental work has been conducted on the incorporation of optically controllable, i.e., light-driven, transmembrane proteins into synthetic vesicle membranes \cite{Goers2018,Harder2024}, showing that the synthesis of vesicle-based functionalized \acp{ND} is feasible. 

Another shortcoming of existing \ac{MC} \ac{TX} models based on \acp{ND} is that it is assumed that the \ac{TX} consists of a single \ac{ND} only. Experimentally, synthetic and natural vesicle concentrations are typically around $10^{10}\si{\per\milli\liter}$ to $10^{13}\si{\per\milli\liter}$ \cite{Scarpa2016,Johnsen2019} and the extraction, design, and analysis of single vesicles is infeasible due to their small size. Thus, considering \acp{TX} consisting of single \acp{ND} is unrealistic. In this work, we therefore consider \emph{\acfp{MVS}}, which are \ac{TX} systems consisting of many \acp{ND}. Due to the inherent randomness in the production process of vesicles, realistic vesicle-based \acp{ND} exhibit very heterogeneous behavior. It has been found that physical parameters such as the vesicle diameter and the permeability of the vesicle membrane vary among individual \acp{ND} \cite{Poschenrieder2017,Guha2021}. Other parameters that are expected to vary between vesicles, including the number of transmembrane proteins per vesicle, cannot be studied experimentally for single vesicles in a comprehensive and cost-efficient manner, and only estimates for the mean value among many vesicles exist \cite{Gaitzsch2019}. In this work, we take into account this heterogeneous nature of vesicles in \acp{MVS} and evaluate the effect of random parameter values on the variance of \ac{SM} release across different experiments. \acp{MVS} are compared to the simplified \emph{\acf{SVS}} model, where the \ac{TX} consists of exactly one \ac{ND}, as commonly assumed in the literature. 

\noindent Furthermore, in-body fluid systems, e.g., the bloodstream, and most chemical experimental systems rely on buffers for the stabilization of ion concentrations \cite{Ellison1958}. Whilst often disregarded in \ac{MC} models, we also study the influence of this realistic environmental effect on the operation of our proposed \ac{TX} design, i.e., we model the effect a buffer has on the ion concentration.

In this paper, we introduce a realistic externally controllable \ac{TX} design based on multiple vesicular \acp{ND} that are functionalized for the controlled release of \acp{SM} by transmembrane proteins. The design based on two types of cooperative transmembrane transport proteins overcomes the existing drawbacks of previously proposed \acp{TX} by enabling the controlled release of \acp{SM} of various kinds. The combination of cooperating energizing and release modules for increased \ac{SM} versatility, which has not been analyzed in the \ac{MC} literature yet, increases the range of future applications of \ac{ND}-based \acp{TX}. The proposed design might be suitable as a controllable \ac{TX} for \ac{TDD} or information transmission by encoding information in the change of the \ac{SM} concentration over time. 
\\ \noindent The main contributions of this work are as follows:

\begin{itemize}
    \item We investigate the design of modular, externally controllable \acp{TX} capable of releasing different types of \acp{SM} and operating under realistic environmental conditions.
    \item We propose a realistic system model to study the experimental deployment of such \acp{TX} comprising multiple \acp{ND}, where the inherent parameter randomness caused by the \ac{ND} synthesis process is taken into account.
    \item We derive analytical and closed-form expressions for the \ac{SM} release, analyze a possible practical realization of the proposed \ac{TX}, and evaluate the impact of important system parameters on the \ac{SM} release characteristics.
\end{itemize}

\noindent The proposed \ac{TX} design was introduced in the conference version \cite{tomDieck2024} of this paper. Here, we significantly extend \cite{tomDieck2024} by considering realistic \acp{MVS}, deriving mathematical models for the reaction to the used light signal, and analyzing the effect of the inherent randomness of the system parameters known from experimental realizations of vesicle-based \acp{ND}.
\noindent The remainder of this paper is organized as follows. In \Cref{sec:model}, the proposed  \ac{TX} design is introduced in detail and a mathematical description as well as possible biological realizations for the energizing and release modules are provided. In \Cref{sec:study}, an \ac{SVS} is analyzed and corresponding solutions for the \ac{SVS} model are derived in \cref{sec:solutions}. In \cref{sec:extension_mvs}, we extend the analyses to the \ac{MVS} scenario and study the impact of parameter randomness. In \Cref{sec:sim}, we compare the results obtained for different solution approaches. Finally, we draw conclusions in \Cref{sec:conclusion}. Note that \Cref{tab:variables} provides an overview of the used variables and indices for the reader's convenience.

\section{System Model}
\label{sec:model}

In this section, we introduce the system model underlying our analyses. We first provide an overview of the proposed \acp{MVS} and \acp{SVS}. Then, we introduce our modeling assumptions and derive a system of \acp{ODE} modeling the \ac{MVS}. Additionally, we introduce a model for the buffer.

\subsection{Multiple and Single Vesicle Systems}
\label{sec:mvs_and_svs}

\noindent \Cref{fig:sys_model}a shows the proposed \ac{MVS} containing $\nves \in \mathbb{N}_{0}$ \acp{SVS}, i.e., single \acp{ND} featuring both energizing and release modules (right panels). Here, $\mathbb{N}_{0}$ denotes the set of non-negative integers. Note that each \ac{ND} consists of exactly one functionalized vesicle. Therefore, the terms vesicle and \ac{ND} are interchangeable in this work. In \cref{fig:sys_model}, the \ac{SM} (or \emph{substrate} $\Sub$) is depicted in orange. The ions $\I$ creating the gradients required for $\Sub$ release are depicted in gray. As shown in \cref{fig:sys_model}b, an \ac{MVS} of volume $\Vouttot$ containing $\nves$ \acp{ND} can be compartmentalized into multiple \acp{SVS} consisting of the respective extravesicular space $\VOUTM{m}$ of size $\VoutM{m}$ and the respective intravesicular space $\VINM{m}$ of size $\VinM{m}$, where $m \in \{1,2,\ldots,\nves\}$. \ac{ND} $m$ has a spherical shape and a lipid or polymer membrane, which enables the encapsulation of \acp{SM} as cargo in $\VINM{m}$. The vesicle membrane is semi-permeable, i.e., it allows the translocation of some molecules between $\VINM{m}$ and $\VOUTM{m}$. The permeability of the membrane to a specific molecule depends on various factors, including the size of the molecule, with smaller molecules corresponding to a higher membrane permeability. The resulting net flux of ion $\I$ in outward direction over the membrane at time $t$ is denoted by $\iilM{m}{t}$ and is also referred to as \emph{leakage}. The substrate $\Sub$ generally is a larger molecule, e.g., an amino acid, for which the membrane typically has a very low permeability \cite{Chakrabarti1992}. Therefore, we assume the leakage of $\Sub$ over the membrane to be negligible. \Cref{fig:sys_model} also depicts the presence of buffer molecules $\L$ that may form a complex with $\I$ molecules in the aqueous solution.

\noindent The \textit{energizing module} (depicted in blue in \cref{fig:sys_model}) is an energy conversion unit transforming the energy of photons to an electrochemical potential (i.e., a concentration and/or charge gradient\footnote{Note that other sources of energy could also be used to power the ion transport via the energizing module. For instance, \ac{ATP}-coupled transporters utilize the chemical energy stored in the molecule \ac{ATP} as a driving force \cite{Soldner2020}. However, the energy for light-driven ion pumps can be readily supplied externally by an \ac{LED}, and thus, light-driven energizing modules are considered exclusively in this work.}). The energizing module of \ac{ND} $m$, consisting of $\npM{m} \in \mathbb{N}_{0}$ light-driven ion pumping transmembrane proteins, actively transports ions $\I$ over the membrane. The influx caused by the energizing module is denoted by $\iieM{m}{t}$. This flux is generally unidirectional, as the direction of the pumps can be controlled during the insertion process in practice \cite{Goers2018}. 
For the energizing module, several naturally occurring light-driven ion pumps emerge as potential realizations, including proton ($\Hplus$) pumps (such as bacteriorhodopsin \cite{Grebenstein2019} and \ac{PR} \cite{Dioumaev2003}), light-driven chloride ($\ce{Cl-}$) pumps \cite{Schobert1982}, and light-driven sodium ($\ce{Na+}$) pumps \cite{Soldner2020}. 

\noindent The \textit{release module} leverages the ion concentration gradient established by the energizing module as energy supply for the transport of the encapsuled substrate $\Sub$ across the \ac{ND} membrane via $\nsymM{m} \in \mathbb{N}_{0}$ $\I$/$\Sub$ co-transporters. In practice, two main groups of co-transporters exist: \emph{Symporters} transport both molecules $\Sub$ and $\I$ in the same direction (see bottom left box on the right-hand side of \cref{fig:sys_model}), while \emph{antiporters} act as exchangers, i.e., $\Sub$ is transported in the opposite direction as $\I$ (see bottom right box on the right-hand side of \cref{fig:sys_model}). The outfluxes of $\I$ and $\Sub$ caused by the release module are denoted by $\iirM{m}{t}$ and $\isrM{m}{t}$, respectively.
If symporters are used for the release module, an $\I$ gradient from $\VINM{m}$ to $\VOUTM{m}$ has to be established, such that $\Sub$ is transported from $\VINM{m}$ to $\VOUTM{m}$, i.e., released from the vesicle. An $\I$ concentration gradient in the opposite direction is required if antiporters are used. Hence, the required insertion direction of the light-driven $\I$ pumps depends on the type of co-transporter employed.
For some co-transporters, a minimum $\I$ concentration gradient across the membrane is required to facilitate the transport \cite{Nakamura1986}. We denote this threshold concentration by $C^{\I}_{\xi}$, where $\thresh$ is the corresponding threshold. As the specific value and meaning of  $\thresh$ depends on the choice of $\I$, we will define $\thresh$ and $C^{\I}_{\xi}$ more rigorously for the concrete example considered in \cref{sec:study}.
There is a vast number of natural, ion-driven co-transporters capable of transporting complex substrates. Biological examples include $\ce{H+}$/amino acid symporters or $\ce{Na+}$/amino acid symporters \cite{Foltz2005,Ryan2009}, $\ce{Na+}$/neurotransmitter symporters \cite{Soldner2020}, and $\ce{Cl-}$/bicarbonate antiporters \cite{Reithmeier2016}. 
The choice of the specific energizing and release modules depends on the type of ion that is available, as both the light-driven pumps and co-transporters need to be able to transport it. For example, if the desired substrate is an amino acid, $\Hplus$/amino acid symporters can be used in combination with $\Hplus$ pumps to establish the necessary $\Hplus$ gradient.
Although there exist a number of possible combinations of energizing and release modules, the practical insertion of some transport proteins into the vesicle membrane may be challenging and many co-transporters lack a formal kinetic characterization. Thus, for the system analysis and the simulations, we will consider proteins that have already been successfully inserted into synthetic vesicle membranes and/or for which the transport kinetics are known.
\subsection{Modeling Assumptions}
\label{sec:ass}
\noindent For the sake of mathematical tractability, we make the following assumptions. These assumptions are used to set up the system model in the following sections.

\begin{itemize}
    \item[(A1)] We assume that in each \ac{SVS}, the solution is well-stirred (e.g., in a stirred bioreactor \cite{Schirmer2021}) and the total number of ions, $\NiM{m}$, and substrate molecules, $\NsM{m}$, are known. We assume that the concentrations of $\I$ and $\Sub$ are uniform in both $\VOUTM{m}$ (e.g., due to stirring) and $\VINM{m}$, as the diffusion of $\I$ and $\Sub$ is fast in comparison to their transport over the membrane: The expected range of diffusion coefficients of $\I$ and $\Sub$ is around $10^{-12} \si{\square\meter\per\second}$ to $10^{-9} \si{\square\meter\per\second}$ \cite{Ma2005,Rice1985,Hills2011}, which are large in comparison to typical vesicle diameters, which are around $20 \si{\nano\meter}$ to $1000 \si{\nano\meter}$ \cite{Rideau2018}, and transport rates of the membrane proteins, which are around $10^{-3} \si{\per\s}$ to $10^{3} \si{\per\second}$ \cite{Tubbe1992,Lanyi1992}. 
    \item[(A2)] We assume that the light signal emitted by the \ac{LED}, $l(t)$, is binary, i.e., $l(t) \in \{0,1\}$ for all times $t$. Here, $l(t) =1$ indicates that the \ac{LED} is switched on, and $l(t) = 0$ means that the \ac{LED} is switched off.
    \item[(A3)] We assume that the buffer molecule $\L$ only interacts with $\I$, as it has a low affinity to other molecules. This assumption was verified experimentally for some standard buffers \cite{Ferreira2015}.
    \item[(A4)] We assume that the intra- and extravesicular volumes $\VINM{m}$ and $\VOUTM{m}$ of vesicle $m$ only interact with each other and do not influence and are not influenced by any other $\VINM{q}$ or $\VOUTM{q}$ if $q \neq m$ for $q,m \in \{1,2,\ldots,\nves\}$, i.e., we assume \ac{SVS} independence. This assumption is justified for $\VoutM{m} \gg \VinM{m}$, as the fluxes from and into the vesicle only generate very small changes in the extravesicular concentrations in this case.
    \item[(A5)] For the analytical expressions, we compartmentalize the total extravesicular volume $\VOUTM{m}$ uniformly to obtain the extravesicular volumes of \ac{SVS} $m$, i.e., $\VoutM{m} = \Vouttot / \nves$. This assumption is justified as $\VOUTtot$ is much larger than the sum of the size of the intravesicular volumes, i.e., $\Vouttot \gg \sum_{m = 1}^{\nves} \VinM{m}$, and the entire system is well stirred (see (A1)). In our baseline numerical simulations (see \cref{sec:sim}), we implement only one large extravesicular volume to validate this assumption.
\end{itemize}

\subsection{System of \acsp{ODE} Modeling the \acs{MVS}}
\label{sec:odes}

Using assumptions (A1)--(A4), the fluxes of $\I$ and $\Sub$ caused by the energizing module (comprising $\npM{m}$ pumps), release module (comprising $\nsymM{m}$ symporters), and leakage can be used to set up a system of coupled \acp{ODE} describing the system kinetics for the proposed \ac{TX} design for an \ac{SVS} in the \ac{MVS} as follows

\begin{align}
    \VinM{m} \frac{\der \CiinM{m}{t}}{\der t} &= \iieM{m}{t} - \iilM{m}{t} - \iirM{m}{t, \CsinM{m}{t}}, \label{eq:chin} \\ \VinM{m} \frac{\der \CsinM{m}{t}}{\der t} &= - \isrM{m}{t, \CiinM{m}{t}},
    \label{eq:csin}
\end{align}

\noindent for all vesicles $m \in \{1, 2, \ldots , \nves \}$. Note that \eqref{eq:chin} and \eqref{eq:csin} are not coupled for different $m$, as we assume \ac{SVS} independence (see (A4)), but they are coupled for each $m$. Here, $\CiinM{m}{t}$, $\CsinM{m}{t}$, $\iieM{m}{t}$, $\iirM{m}{t, \CsinM{m}{t}}$, and $\isrM{m}{t, \CiinM{m}{t}}$ are the intravesicular concentrations of $\I$ and $\Sub$ in \si{\mol\per\cubic\meter}, the influx of $\I$ caused by the energizing module, the outflux of $\I$ caused by the release module, and the outflux of $\Sub$ caused by the release module of \ac{ND} $m$, respectively. All fluxes are measured in \si{\mol\per\s}. If the release module consists of antiporters, the signs of the fluxes in \eqref{eq:chin} need to be switched. Note that the coupled system of \acp{ODE} in \eqref{eq:chin} and \eqref{eq:csin} only considers the concentrations in $\VINM{m}$, but is sufficient to characterize the entire \acp{SVS}, as $\NsM{m}$ and $\NiM{m}$ are constant and known (see (A4)). The concentrations in $\VOUTM{m}$ can thus be derived from those in $\VINM{m}$, i.e., the $\I$ concentration $\CioutM{m}{t} = (\NiM{m} - \CiinM{m}{t}\VinM{m})/\VoutM{m}$ and the $\Sub$ concentration $\CsoutM{m}{t} = (\NsM{m} - \CsinM{m}{t}\VinM{m})/\VoutM{m}$. According to \eqref{eq:chin} and \eqref{eq:csin}, all \acp{SVS} can be analyzed \textit{individually} before determining their joint impact on the extravesicular volume $\VOUTtot = \bigcup_{m=1}^{\nves} \VOUTM{m}$ by averaging the impact of the individual \acp{SVS}. 
To this end, the $\Sub$ concentration in $\VOUTtot$, can be obtained from the individual \acp{SVS} in subvolumes $\VOUTM{m}$ as follows

\begin{equation}
    \Csouttot{t} = \frac{1}{\Vouttot} \sum_{m=1}^{\nves} \VoutM{m} \CsoutM{m}{t} \stackrel{\text{(A5)}}{=} \frac{1}{\nves} \sum_{m=1}^{\nves} \CsoutM{m}{t}.
    \label{eq:csouttot}
\end{equation} 

\subsection{Buffering Effects}
\label{sec:model_buff}

\noindent The system of \acp{ODE} in \eqref{eq:chin} and \eqref{eq:csin} does not consider any buffering effects, yet. However, as metal ion or pH buffers are used in most experimental environments and are present, e.g., in in-body fluids \cite{Ellison1958}, their effect has to be taken into account.  We consider the following reversible reaction between $\I$ and a buffering ligand $\L$

\begin{align}
	\ce{IB <=>[$k_-$][$k_{+}$] I + B},
	\label{eq:eq}
\end{align}

\noindent where $\I\L$, $\kMinus$, and $\kPlus$ are the complex formed by the ion and the ligand, the unbinding rate constant of $\I$ from $\L$, and the binding rate constant of $\I$ to $\L$, respectively. The dissociation constant of the ligand in equilibrium is $\ka = \kMinus / \kPlus$ in \si{\mol\per\cubic\meter}. Using the mass action law \cite{Po2001}, we obtain the concentration of $\I$ in the buffered system as follows
\begin{equation}
    -\log (\Ci) =  - \log (\ka) + \log  (  \Cl) - \log (\Cil),
    \label{eq:hh-eq}
\end{equation}
\noindent where $C^{\mathrm{X}}$ denotes the concentration of molecule $\mathrm{X}$. Generally, the buffer molarity is given by $\CZERO = \Cil + \Cl$ and remains constant. Thus, in equilibrium and for a given $\I$ concentration, $\Ci$ and $\Cil$ can be deduced from \eqref{eq:hh-eq}. Note that for $\I \;\hat{=}\; \Hplus$ ligand $\L$ would be a base and $-\log (\Chplus) = \mathrm{pH}$. In this case, \eqref{eq:hh-eq} specializes to the well-known Henderson-Hasselbalch equation \cite{Ellison1958}. We do not explicitly model the concentration of the buffer molecules in our system of \acp{ODE}, as the required extension of \eqref{eq:chin} and \eqref{eq:csin} to account for the concentration of four types of molecules becomes intractable. Instead, \eqref{eq:hh-eq} will be incorporated into the numerical simulations of \eqref{eq:chin} and \eqref{eq:csin} and serve as a ground truth for the impact of the buffer. However, in our proposed analytical models, the impact of the buffer is approximated (see \Cref{sec:buffer}).

\section{Operation and Analysis of Single Vesicle System}
\label{sec:study}

This section investigates a specific realization of the proposed \acp{ND} using light-driven $\Hplus$ pumps (such as \ac{PR} \cite{Dioumaev2003}) and $\Hplus$ symporters (e.g., PAT1 \cite{Foltz2005}) as energizing and release modules, respectively. Hence, $\I \;\hat{=}\; \Hplus$ in the rest of this paper. As the use of light-driven $\Hplus$ pumps allows for a variety of possible release modules and, hence, different $\Sub$ (e.g., amino acids or neurotransmitters), we continue to treat $\Sub$ as a generic substrate molecule. For now, we concentrate on the \ac{SVS} scenario. We begin by discussing the envisioned functionality of a \ac{SVS} in \cref{sec:func}. In \cref{sec:cycle_types}, we discuss the different cycle types that can be observed upon illumination. Then, we propose expressions for the $\Hplus$ and $\Sub$ fluxes in \cref{sec:fluxes}.

\renewcommand\arraystretch{0.9}
\footnotesize
\begin{xltabular}{\textwidth}{|l|X|} 
\caption{\centering{Overview of the most important variables and indices.}}\label{tab:variables} \\
    \hline
    \textbf{Index} & \textbf{Meaning} \\
    \hline\hline
    $\I$ & Ion transported by light-driven ion pumps and co-transporters. For the specific realization considered in \cref{sec:study}, $\I \;\hat{=}\; \Hplus$. \\ \hline
    $\Sub$ & Substrate transported by co-transporters. \\ \hline
    $\mathrm{E}$ & Energizing module \\ \hline
    $\mathrm{R}$ & Release module \\ \hline
    $\mathrm{L}$ & Leakage \\ \hline
    $\mathrm{P}$ & Light-driven ion pumps \\ \hline
    $\mathrm{Sym}$ & Symporters (or antiporters) \\ \hline\hline
    \textbf{Variable} & \textbf{Description} \\
    \hline\hline
    $\mathrm{C}^{\mathrm{X}}_{\mathrm{x},m}(t)$ & Concentration of molecule $\mathrm{X} \in \{\I, \Sub\}$ in the intra- or extravesicular volume ($\mathrm{x} \in \{\mathrm{in}, \mathrm{out}\}$) of vesicle $m$ over time $t$ in \si{\mol\per\cubic\meter}. \\ \hline
    $\mathrm{C}^{\mathrm{X}}_{\mathrm{x},0}$ & Initial concentration of molecule $\mathrm{X}$ in the intra- or extravesicular volume ($\mathrm{x} \in \{\mathrm{in}, \mathrm{out}\}$) of each vesicle. \\ \hline
    $\CZERO$ & Buffer molarity in \si{\mol\per\cubic\meter}. \\ \hline
    $\dinM{m}$ & Inner vesicle diameter of vesicle $m$ in \si{\m}. \\ \hline
    $\dmem$ & Diameter of the vesicle membrane in \si{\m}. \\ \hline
    $\fSym$ & Ratio of $\I$ and $\Sub$ molecules transported by a co-transporter. \\ \hline
    $\ghlM{m},\ghlHatM{m}$ & Effective rate of $\I$ leakage over the membrane and permeability coefficient of the membrane to $\I$ of vesicle $m$ in \si{\cubic\meter\per\s} and \si{\meter\per\s}, respectively. \\ \hline
    $\gamma^{\I}_{\mathrm{P},m}, \ghpHat$ & Effective $\I$ transport rate of the energizing module in vesicle $m$ and effective $\I$ transport rate of one light-driven $\I$ pump in \si{\mol\per\s} and \si{\per\s}, respectively. \\ \hline
    $\gamma^{\mathrm{X}}_{\mathrm{Sym},m}, \hat{\gamma}^{\mathrm{X}}_{\mathrm{Sym}}$ & Effective $\mathrm{X} \in \{\I, \Sub\}$ transport rate of the release module in vesicle $m$ and effective $\mathrm{X}$ transport rate of one co-transport protein in \si{\mol\per\s} and \si{\per\s}, respectively. \\ \hline
    $\ka$ & Buffer dissociation rate in \si{\mol\per\cubic\meter}. \\ \hline
    $\Km$ & Michaelis-Menten constant of the co-transporters in \si{\mol\per\cubic\meter}. \\ \hline
    $l(t)$ & Light signal with $l(t) \in \{0,1\}$ over time $t$. \\ \hline
    $i^{\I}_{\mathrm{Y},m}(t)$ & Flux of $\I$ caused by energizing or release module or leakage ($\mathrm{Y} \in \{\mathrm{E}, \mathrm{R}, \mathrm{L}\}$, respectively) in vesicle $m$ over time $t$ in \si{\mol\per\cubic\meter\per\s}. \\ \hline
    $i^{\Sub}_{\mathrm{R},m}(t)$ & Flux of $\Sub$ caused by the release module in subsystem $m$ over time $t$ in \si{\mol\per\cubic\meter\per\s}. \\ \hline
    $\xi$ & Logarithmic threshold of symport start. \\ \hline
    $\CiinSwitch{m}$ & Symport threshold concentration, i.e., for symporter activity in \si{\mol\per\cubic\meter}. \\ \hline
    $n_{\mathrm{x},m}$ & Number of pumps ($\mathrm{x} = \mathrm{P}$) or co-transporters ($\mathrm{x} = \mathrm{Sym}$) in membrane of vesicle $m$. \\ \hline
    $\ntotM{m}$ & Total number of functional proteins in membrane of vesicle $m$. \\ \hline
    $\nEx$ & Number of simulations. \\ \hline
    $\nves$ & Number of vesicles/\acp{SVS} in an \ac{MVS}. \\ \hline
    $\nMod$ & Number of simulated vesicles/\acp{SVS} in an \ac{MVS} ($\nMod \le \nves$). \\ \hline
    $\tSecStart{t}$ & Start time of the current signal phase at time $t$ in \si{\s}.\\ \hline
    $\tji{\mathrm{x}}{i}$ & Time instants of pumping start ($\mathrm{x} = 1$) and end ($\mathrm{x} = 3$) in cycle $i$ in \si{\s}. \\ \hline
    $\tji{\mathrm{x}}{i,m}$ & Time instants of symport start ($\mathrm{x} = 2$) and end ($\mathrm{x} = 4$) in cycle $i$ in subsystem $m$ in \si{\s}. \\ \hline
    $\attSec{t}$ & Flux attenuation factor caused by the buffer. \\ \hline
    $\mathcal{V}_{\mathrm{x},m},V_{\mathrm{x},m}$ & Intra- ($\mathrm{x} = \mathrm{in}$) and extravesicular ($\mathrm{x} = \mathrm{out}$) spaces and their sizes in \si{\cubic\meter}, respectively. \\ \hline 
    $\VOUTtot, \Vouttot$ & Total extravesicular space and its size in \si{\cubic\meter}, respectively.\\ \hline
\end{xltabular}
\normalsize

\begin{figure*}[thb]
    \centering
    \includegraphics[width=\textwidth]{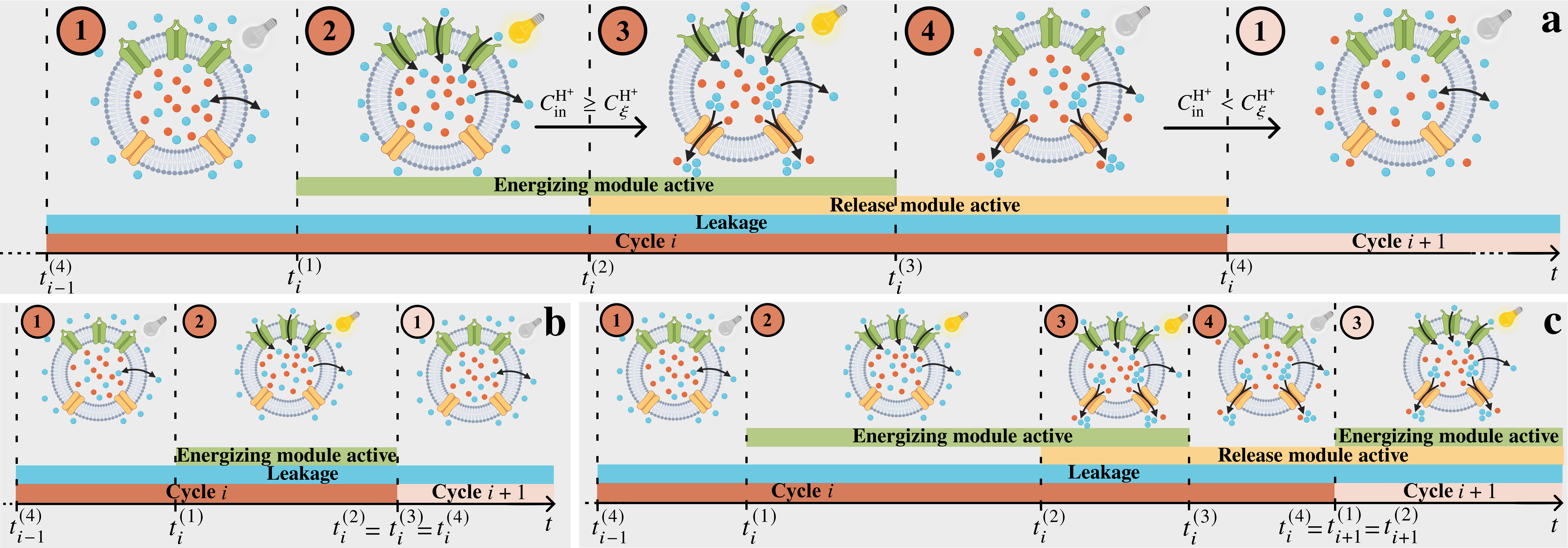}
    \caption{(a): A regular illumination cycle for an \ac{ND} with $\smash{\np = 3}$ and $\smash{\nsym = 2}$ comprising four different cycle phases. The times $\smash{\tsp{i}}$, $\smash{\tssym{i}}$, $\smash{\tep{i}}$, and $\smash{\tesym{i}}$ mark the transitions between two cycle phases. Here, $\smash{\ChinSwitch}$ indicates the symport threshold concentration. (b): A cycle without symporter activity. (c): Two subsequent cycles in between which the symport does not end. Parts of the image were created with BioRender.com.}
    \label{fig:cycle1}
\end{figure*}

\subsection{Operation of Single Vesicle System}
\label{sec:func}

In the following, while considering one \ac{SVS}, we will drop vesicle index $m$ to simplify the notation. To discuss the functionality of the proposed \ac{ND} in detail for the \ac{SVS}, it is helpful to examine its behavior upon different external and internal stimuli. Hence, we consider the different states of the \ac{ND} possible during one illumination cycle (shown in \cref{fig:cycle1}). A cycle describes a time period during which the light source is turned on and off exactly once. A cycle consists of different phases, which are defined as the time periods during which a certain combination of system components (energizing module, release module, and leakage) are active. As the external light signal $l(t) \in \{0,1\}$ can be chosen arbitrarily and usually consists of multiple illumination cycles, variable $i \in \mathbb{N}_{0}$ is used to index the cycles. Variables $\tji{j}{i}$ for $j \in \{1, 2, 3, 4\}$ denote the end times of phase $j$ in cycle $i$, as shown on the axis in \cref{fig:cycle1}a. First, we consider a regular cycle $i$ consisting of the following four phases (P1)--(P4) (see \cref{fig:cycle1}a):

\begin{enumerate}
    \item[(P1)] \textbf{Leakage:} During the first cycle phase the \ac{ND} is not illuminated and both types of transport proteins are inactive, i.e., only the leakage influences the $\Hplus$ flux ($|\ihl{t}| \ge 0$, $\ihp{t} = 0$, $\ihsym{t} = 0$). 
    \item[(P2)] \textbf{Energizing module and leakage:} At $\tsp{i}$ the illumination begins and the proton pumps start transporting $\Hplus$ ($\ihp{t} > 0$). Simultaneously, the increasing $\Hplus$ concentration in $\VIN$ leads to an increasing pH difference between $\VOUT$ and $\VIN$ yielding a leakage outflux of $\Hplus$ ($\ihl{t} > 0$).
    \item[(P3)] \textbf{Energizing and release modules, and leakage:} When the threshold concentration for symporter activity within the vesicle, $\ChinSwitch$, is reached at time $\tssym{i}$, the symporters in the release module become active. The symporters cause an additional outflux of $\Hplus$ and an outward transport of $\Sub$ ($\ihsym{t} > 0, \issym{t} > 0, \ihp{t} > 0, \ihl{t} > 0$).
    \item[(P4)] \textbf{Release module and leakage:} When the illumination ends at time $\tep{i}$ but the intravesicular $\Hplus$ concentration is still above the threshold, i.e., $\Chin{t} > \ChinSwitch$, the symporters remain active while the pumps stop transporting $\Hplus$ ($\ihp{t} = 0$). During this cycle phase, both the symporters and the leakage cause $\Hplus$ outflux and $\Sub$ is transported outwards ($\ihsym{t} > 0, \issym{t} > 0, \ihl{t} > 0$). When $\Chin{t}$ falls below threshold $\ChinSwitch$ at time $\tesym{i}$, cycle $i$ ends and the next cycle $i + 1$ starts. Note that, at the latest, a new cycle begins when the next illumination phase starts (even if the symport has not ended at that point).
\end{enumerate}

\noindent Note that we assume $\tesym{0} = 0$, i.e., the first cycle ($i = 1$) starts at $t = 0$. While $\tsp{i}$ and $\tep{i}$ depend on the external light signal $l(t)$, the symport start and end times, $\tssym{i}$ and $\tesym{i}$, depend exclusively on the $\Hplus$ concentrations and, thus, have to be calculated from $\Chin{t}$, as shown in \Cref{sec:limits}. The cycle index is required for the time variables limiting the cycle phases, but can be omitted for the fluxes and concentrations, which are defined for absolute time $t$. This is possible because the time variables for new cycles are monotonically increasing, i.e., $\tji{j'}{i} \ge \tji{j}{i}$ where $j' > j$, and $\tji{j}{i'} \ge \tji{j}{i}$ where $i' > i$ (see \cref{fig:cycle1}). Generally, different types of illumination cycles can occur and the previously discussed sequence of phases holds only for regular cycles, i.e., cycles of type (a) in \cref{fig:cycle1}. For other cycle types, not all individual phases (P1)--(P4) do necessarily occur. These cycle types are examined in the following.

\subsection{Other Types of Cycles}
\label{sec:cycle_types}

\noindent As mentioned previously, other types of cycles than type (a) can occur. Cycles of type (b) (\cref{fig:cycle1}b), for instance, occur when the illumination period in cycle $i$ is too short to reach $\ChinSwitch$. As the resulting symport duration, $\tesym{i} - \tssym{i}$, is $\SI{0}{\s}$, $\tssym{i} = \tep{i} = \tesym{i}$ follows to ensure that the symport duration is correctly obtained and that all cycle limit times are properly defined and monotonically increasing. 

\noindent Cycle type (c) occurs when the time between two illumination phases is too short for the symport of the first cycle to stop (see \cref{fig:cycle1}c). Hence, the time between the symport of two cycles, $\tssym{i + 1} - \tesym{i}$, is set to $\SI{0}{\s}$. This holds for $\tesym{i} = \tsp{i + 1} = \tssym{i + 1}$. Hereby, it is ensured that all cycle phase limit times are properly defined.

\subsection{Proton and Substrate Fluxes}
\label{sec:fluxes}
\noindent Before deriving solutions for \eqref{eq:chin} and \eqref{eq:csin} in \cref{sec:exact,sec:closed}, mathematical models for the fluxes of $\Hplus$ and $\Sub$ are required. 

\paragraph{Energizing Module}

\noindent We assume that the light-driven proton pumps always operate at maximum effective rate as long as enough $\Hplus$ is available in $\VOUT$ during illumination. As the transport process of $\Hplus$ by light-driven pumps is rate-limited by one reaction, this assumption is well-justified \cite{Bamann2014,Arjmandi2016}. Consequently, the $\Hplus$ flux in \si{\mol\per\s} caused by the energizing module comprising $\np$ proton pumps, $\ihp{t}$, is defined as follows

\begin{equation}
    \ihp{t} = \frac{\Chout{t}}{\ChoutZERO} \ghp \mathds{1}_{\{1\}}(l(t)),
    \label{eq:flux_p}
\end{equation}

\noindent where $\ChoutZERO$, $\ghp$, and $\mathds{1}_{\mathcal{X}}(x)$ are the initial $\Hplus$ concentration in $\VOUT$, the effective rate constant of $\Hplus$ for $\np$ proton pumps, and the indicator function, i.e., $\mathds{1}_{\mathcal{X}}(x) = 1$ if $x \in \mathcal{X}$, and $\mathds{1}_{\mathcal{X}}(x) = 0$ otherwise, respectively. The effective pumping rate of one vesicle, $\ghp = \ghpHat \np/\NA$ in \si{\mol\per\s}, depends on the effective pumping rate of one proton pump $\ghpHat$ in $\si{\per \second}$, the number of pumps $\np$, and the Avogadro constant $\NA = \SI{6.022e23}{\per\mol}$.

\paragraph{Release Module}

\noindent We assume that the symporters are only active if the intravesicular $\Hplus$ concentration, $\Chin{t}$, exceeds threshold $\ChinSwitch$, based on the co-transporter kinetics described in the literature \cite{Nakamura1986,Foltz2005}. The associated $\Sub$ and $\Hplus$ fluxes caused by the release module comprising $\nsym$ symporters can be described as follows

\begin{equation}    
    \issym{t} = \gssym \left ( \frac{\Csin{t}}{\Csin{t} + \Km} \right) \mathds{1}_{[\ChinSwitch,\infty)}(\Chin{t}), \; \; \ihsym{t} =\fSym \issym{t},%
    \label{eq:ihsym}
\end{equation}

\noindent where $\gssym$, $\Km$, and $\fSym \in \mathbb{Q}$ are the effective $\Sub$ symport rate constant of one vesicle, the Michaelis-Menten constant in \si{\mol\per\cubic\meter}, and the ratio of $\Hplus$ to $\Sub$ molecules that are co-transported by a symporter, respectively. Here, $\mathbb{Q}$ denotes the set of rational numbers. As $\fSym$ is fixed and depends on the molecular structure of the co-transporter, the $\Hplus$ and $\Sub$ fluxes caused by the symporters can be deduced from one another by multiplication or division by $\fSym$. The effective symport rate of the release module in one \ac{ND} is given by $\gssym = \gssymHat \nsym / \NA$ in $\si{\mol \per \second}$, where $\gssymHat$ is the effective $\Sub$ transport rate constant of one symporter in \si{\per\s}. In case of $\I = \Hplus$, $\thresh$ corresponds to the pH threshold between $\VIN$ and $\VOUT$ required for the start of the symport \cite{Nakamura1986}. Thus, the threshold concentration required for the symport to start is $\ChinSwitch = \Nh/(\Vout 10^{-\thresh} + \Vin)$, where $\Nh$ is the total number of $\Hplus$ molecules in the \ac{SVS}, and is obtained from the initial system pH and $\thresh$. The expression for $\ChinSwitch$ is obtained by solving the equation $\thresh = \log (\ChinSwitch) - \log (\ChoutSwitch)$, where $\ChoutSwitch = (\Nh - \Vin \ChinSwitch)/\Vout$ is the extravesicular $\Hplus$ concentration when the symport starts. Note that the symporters are naturally only active as long as $\Sub$ is available, i.e., $\Csin{t} > 0$. The symporters become inactive when the vesicle has released all of its cargo. This state is referred to as substrate depletion. 

\paragraph{Leakage}

\noindent The $\Hplus$ flux caused by leakage of $\Hplus$ over the vesicle membrane is given as follows

\begin{equation}        
    \ihl{t} = \ghl \left ( \Chin{t} - \Chout{t} \right ),
    \label{eq:flux_l}
\end{equation}

\noindent where $\ghl = \ghlHat A_{\mathrm{ves}}$ is the membrane permeability with respect to $\Hplus$ in $\si{\cubic \meter \per \second}$. Here, $\ghlHat$ is the permeability coefficient of $\Hplus$ for the vesicle membrane in $\si{\meter \per \second}$ and $A_{\mathrm{ves}}$ is the outer surface area of the vesicle in $\si{\square \meter}$. Note that the leakage flux scales linearly with the $\Hplus$ concentration gradient between $\VIN$ and $\VOUT$.

\section{Analytical Solutions}
\label{sec:solutions}

In this section, we derive analytical solutions for the concentrations of $\Hplus$ and $\Sub$. First, an exact analytical solution is proposed. Then, we derive a closed-form approximation. Subsequently, we provide expressions for the calculation of the cycle-phase limits. Finally, we show how the buffering effect can be incorporated into the analytical and close-form models.

\subsection{Exact Analytical Solution}
\label{sec:exact}
For the analytical solution, \acp{ODE} \eqref{eq:chin} and \eqref{eq:csin} are considered separately for each cycle phase shown in \cref{fig:cycle1}. 
Therefore, we introduce variable $\tSecStart{t}$ to indicate the start time of the current cycle phase

\begin{equation}
    \tSecStart{t} = \max_{i,j} \big\{ \tji{j}{i} \left| \right. \tji{j}{i} < t \big\}, \quad \forall j \in \{1,2,3,4\}.
    \label{eq:tsecstart}
\end{equation}

\noindent Note that, as we define all cycle limits for absolute time, maximization over the cycles themselves is required to ensure that the current/most recent cycle is considered in \eqref{eq:tsecstart}. Moreover, $\Csin{t}$ and $\Chin{t}$ can be expressed in a compact manner, using Proposition 1 (see below) and the cycle phase--dependent variables $a$ and $b$

\begin{equation}
    a = \jhla + \jhpa \mathds{1}_{\{\tsp{i}, \tssym{i}\}}(\tSecStart{t}), \quad
    b = \jhlb + \jhpb \mathds{1}_{\{\tsp{i}, \tssym{i}\}}(\tSecStart{t}),
\end{equation}

\noindent with auxiliary variables $\jhla = \ghl (\Vin^{-1} + \Vout^{-1})$, $\jhpa = \ghp / (\Vout \ChoutZERO)$, $\jhlb = - \ghl \Nh /(\Vout\Vin)$, and $\jhpb = \jhpa \Nh /\Vin$. Here, $a$ and $b$ capture the influence of the energizing module and the leakage on the intravesicular $\Hplus$ concentration. In the following, we use auxiliary function

\begin{equation}
    f(t) := \CsinStart{t} / \Km \, \mathrm{exp}([\CsinStart{t} - \gssym / \Vin (t - \tSecStart{t})]/\Km),
\end{equation}

\noindent where $\CsinStart{t} = \Csin{\tSecStart{t}}$ denotes the initial intravesicular concentration of $\Sub$ in the current cycle phase\footnote{Note that $\CsinZERO := \Csin{0}$ denotes the initial intravesicular concentration of $\Sub$ in the system at $t = 0$.}.

\begin{proof}[Proposition 1] The intravesicular concentration of $\Sub$, $\Csin{t}$, in \eqref{eq:csin} is obtained as follows for each phase $j$ in cycle $i$ \\
\begin{equation}
    \Csin{t} = 
    \begin{cases}
        \Km \lamW{f(t)}, & \text{if } \tssym{i} < t \le \tesym{i},\\
        \CsinStart{t}, & \text{if } \tesym{i - 1} < t \le \tsp{i},
    \end{cases} %
    \label{eq:csin_aci}
\end{equation}
\noindent where $\lamW{\cdot}$ denotes the Lambert W--function, defined by ${\lamW{x}\mathrm{exp}(\lamW{x}) = x}$.
\noindent The intravesicular $\Hplus$ concentration, $\Chin{t}$, in \eqref{eq:chin} is obtained as follows for each phase $j$ in cycle $i$
\begin{equation}
    \Chin{t} = \left[\ChinStart{t} - \alpha(\tSecStart{t}) + \alpha(t)\right] \mathrm{e}^{-a(t - \tSecStart{t})},
    \label{eq:chin_aci}
\end{equation}

\noindent where $\ChinStart{t} = \Chin{\tSecStart{t}}$ is the initial intravesicular $\Hplus$ concentration of current cycle phase $j$ and
\begin{equation}
    \alpha(t) = 
    \begin{cases}
        \int \limits_{\tSecStart{t}}^{t}\left( b - \frac{\ghsym}{\Vin}\frac{\lamW{f(\omega)}}{\lamW{f(\omega)} + 1} \right )\mathrm{e}^{a(\omega - \tSecStart{\omega})} \mathrm{d} \omega, & \text{if }  \tssym{i} < t \le \tesym{i}, \\
        \frac{b}{a} \mathrm{e}^{a(t - \tSecStart{t})}, & \text{if } \tesym{i - 1} < t \le \tsp{i}.
    \end{cases} %
    \label{eq:variedc}
\end{equation}
\renewcommand{\qedsymbol}{}
\end{proof}

\begin{proof}
    Due to space limitations, we provide only a sketch of the proof for Proposition 1. We obtain \eqref{eq:csin_aci} by inserting \eqref{eq:ihsym} into \eqref{eq:csin} and solving for $\Csin{t}$. For a detailed derivation of the solution of the integral of a Michaelis-Menten equation, we refer the reader to \cite[Sec.~2]{Goudar1999}.
    We obtain \eqref{eq:chin_aci} by inserting \eqref{eq:csin_aci} and \eqref{eq:flux_p}--\eqref{eq:flux_l} into \eqref{eq:chin}, resulting in
     \begin{equation}
        \frac{\mathrm{d} \Chin{t}}{\mathrm{d} t} = - a \Chin{t} + b - \mathds{1}_{[\ChinSwitch,\infty)}(\Chin{t}) \frac{\ghsym}{\Vin} \frac{\lamW{f(t)}}{\lamW{f(t)} + 1}.
        \label{eq:chin_aci_d}
    \end{equation}
    \noindent Solving this inhomogeneous \ac{ODE} by variation of the constant yields the time-variant integration constant $\alpha(t)$ in \eqref{eq:variedc}.
\end{proof}

\noindent While \eqref{eq:csin_aci} and \eqref{eq:chin_aci} provide an exact analytical solution for the concentrations of $\Hplus$ and $\Sub$ in an \ac{SVS}, the integral in \eqref{eq:variedc} cannot be solved in closed form and has to be evaluated numerically. Therefore, we derive a closed-form approximation that does not require numerical integration in the following.

\subsection{Closed-Form Approximation}
\label{sec:closed}
To obtain an approximate closed-form solution for $\Chin{t}$ and $\Csin{t}$, i.e., an analytical solution that circumvents the numerical integration in \eqref{eq:variedc}, we approximate the Michaelis-Menten term in \eqref{eq:ihsym} by linearization. This leads to the approximation of the fluxes of $\Sub$ and $\Hplus$ through the release module in \eqref{eq:ihsym}, i.e., 

\begin{equation}
    \issym{t} \approx 
        \gssym \mathds{1}_{\mathbb{R}^{+}}( \Csin{t})\mathds{1}_{[\ChinSwitch,\infty)}(\Chin{t}) := \gssym(t) \mathds{1}_{[\ChinSwitch,\infty)}(\Chin{t}),
    \label{eq:linear}
\end{equation}

\noindent where $\mathbb{R}^{+}$ and $\gssym(t) = \gssym \mathds{1}_{\mathbb{R}^{+}}( \Csin{t})$ denote the set of positive real numbers and the time-dependent $\Sub$ transport rate, respectively. Approximation \eqref{eq:linear} is justified as long as $\CsinZERO \gg \Km$ and assumes that the symporters operate with maximum rate as long as $\Sub$ is available and stop transporting when no $\Sub$ is left inside the vesicle. After inserting \eqref{eq:linear} into \eqref{eq:csin} and solving for $\Csin{t}$, \eqref{eq:csin_aci} simplifies to
\begin{equation}
    \Csin{t} =
    \begin{cases}
        \CsinStart{t} - \frac{\gssymT{t}}{\Vin} \left[t - \tSecStart{t} \right], & \text{if } \tssym{i} < t \le \tesym{i}, \\
         \CsinStart{t}, & \text{if } \tesym{i -1} < t \le \tssym{i}.
    \end{cases}
    \label{eq:csin_ana}
\end{equation}

\noindent Similarly, \eqref{eq:chin_aci} simplifies to
\begin{equation}
    \Chin{t} = 
a^{-1}b' - \left[ \ChinStart{t} -  a^{-1}b'\right]\mathrm{e}^{-a(t - \tSecStart{t})},
\label{eq:ana_chin}
\end{equation}

\noindent where $b' = b + \jhsymb \mathds{1}_{\{\tssym{i}, \tep{i}\}}(\tSecStart{t})$ with $\jhsymb = - \ghsymT{t}/\Vin$. In contrast to \eqref{eq:csin_aci} and \eqref{eq:chin_aci}, the closed-form approximations in \eqref{eq:csin_ana} and \eqref{eq:ana_chin} can be used to determine signal parameters such as the symporter start and end times, and the time of $\Sub$ depletion. This is required for an adequate analytical description of the $\Hplus$ and $\Sub$ concentrations (upon external illumination of the \ac{ND}) and the estimation of the duration of functionality of the \ac{ND}. 

The validity of the analytical solutions in \eqref{eq:csin_aci} and \eqref{eq:chin_aci} and the closed-form approximations in \eqref{eq:csin_ana} and \eqref{eq:ana_chin} will be verified by comparison to the numerical solution of \acp{ODE} \eqref{eq:chin} and \eqref{eq:csin} using a \ac{FDM} \cite{Grossmann2007}. As \ac{FDM} is exact for sufficiently small time steps $\step t$, we will consider the \ac{FDM} simulations as the ground truth in \cref{sec:sim}.

\subsection{Calculation of Cycle Phase Limits}
\label{sec:limits}

As mentioned in \cref{sec:func} and shown in \cref{fig:cycle1}, the end times of the phases in cycle $i$ are defined by $\tji{j}{i}$ for $j \in \{1,2,3,4\}$. The times $\tsp{i}$ and $\tep{i}$ for the start and the end of the illumination are defined by the external light signal $l(t)$. In contrast, the symport start and end times, $\tssym{i}$ and $\tesym{i}$, have to be calculated from the preceding cycle phases, i.e., phases (P2) and (P4), respectively. As \eqref{eq:chin_aci} is not invertible, $\tssym{i}$ and $\tesym{i}$ cannot be calculated from the exact analytical solution. Instead, the closed-form approximation \eqref{eq:ana_chin} can be inverted for all cycle phases, leading to approximate symport start and end times as follows:

\begin{equation}
    \tilde{t}^{\,(x)}_{i} = -a^{-1} \left[\log \left(\ChinSwitch - a^{-1}b' \right) - \log \left( \ChinStart{\tsymX{x - 1}{i}} - a^{-1}b' \right)\right] + \tsymX{x - 1}{i},
    \label{eq:tssymtheo}
\end{equation} 

\noindent for $x \in \{2,4\}$. To ensure that all $\tji{j}{i}$ are monotonically increasing, we have to apply correctional clipping for cycle types (b) and (c), for which some cycle phases do not occur, i.e., where some phases have a duration of $\SI{0}{\s}$ (see \cref{fig:cycle1} and \cref{sec:cycle_types}). Upon correctional clipping the actual symport start times $\tsymX{2}{i}$ and end times $\tsymX{4}{i}$ can be obtained for all possible cycle types as follows

\begin{equation}
    \tsymX{2}{i} = \max ( \min(\tsymX{1}{i}, \tilde{t}^{\,(2)}_{i} ), \tsymX{3}{i} ), \qquad  \tsymX{4}{i} = \max ( \min(\tsymX{3}{i}, \tilde{t}^{\,(2)}_{i} ), \tsymX{1}{i + 1} ).
    \label{eq:tsesym}
\end{equation}

\noindent For example, in case of a cycle of type (b) (see \cref{fig:cycle1}b), the symport start time, $\tssym{i}$, will be set to $\tep{i}$ according to \eqref{eq:tsesym} as the illumination time is too short for the symporters to become active. Similarly, $\tesym{i}$ is set to $\tsp{i + 1}$ because it would otherwise yield a physically infeasible value.

\subsection{Influence of Buffer}
\label{sec:buffer}

We assume that the system is immersed in a buffer solution (see \cref{sec:ass}) with total buffer molarity $\CZERO$ in both volumes $\VIN$ and $\VOUT$. Equation \eqref{eq:hh-eq} can be used to calculate the pH of a monoprotic buffer using the concentration of acid and base molecules and, thus, allows for an explicit buffer modeling. Therefore, we incorporate \eqref{eq:hh-eq} into the numerical \ac{FDM} solution as the ground truth for the impact of the buffer.

\noindent However, \eqref{eq:hh-eq} is not amenable to analytical solution as its incorporation into \eqref{eq:chin} and \eqref{eq:csin} would lead to an intractable system of \acp{ODE}. Thus, we approximate the effect of the buffer solution as an attenuation of the $\Hplus$ flux from one volume to another as proposed in \cite{Zifarelli2008}. This approach simply scales the flux rates of $\Hplus$, i.e., $\ghl$, $\ghp$, and $\ghsym$, by a factor $\attSec{t} = \ka \CZERO (\Chin{t} + \ka)^{-2}$ \cite[Eq. (12)]{Wagner1994}. This attenuation factor depends on the inner $\Hplus$ concentration $\Chin{t}$ and therefore varies over time. Incorporating the time-variant attenuation factor $\attSec{t}$ into \eqref{eq:chin} and \eqref{eq:csin} would thus lead to a system of non-linear \acp{ODE}. To avoid this, we assume that during \textit{each cycle phase} the attenuation factor $\attSec{t}$ remains constant and is computed using the initial intravesicular $\Hplus$ concentration of the cycle phase, $\ChinStart{t} = \Chin{\tSecStart{t}}$, i.e.,

\begin{equation}
    \attSec{\tSecStart{t}} \approx \ka \CZERO(\Chin{\tSecStart{t}} + \ka)^{-2}.
    \label{eq:attSec}
\end{equation}

\noindent Note that only values $\attSec{\tSecStart{t}} > 1$ are valid as other values correspond to an unbuffered scenario, which does not require flux attenuation. Scaling the $\Hplus$ fluxes in \eqref{eq:chin} and \eqref{eq:csin} with \eqref{eq:attSec} leads to a system of \acp{ODE} with a tractable solution for all cycle phases. 
In fact, the obtained solution is similar to \eqref{eq:csin_ana} and \eqref{eq:ana_chin} and simply uses rescaled auxiliary variables $j^{x*}_{Y} = j^{x}_{Y}\attSec{\tSecStart{t}}^{-1}$ for $x \in \{a, b\}$ and $Y \in \{\mathrm{L}, \mathrm{P}, \mathrm{Sym}\}$.
In our simulations, we will validate approximation \eqref{eq:attSec} by comparison to our numerical \ac{FDM} solution, where the impact of the buffer is explicitly modelled using \eqref{eq:hh-eq}.

\section{Extension to Multiple Vesicle System}
\label{sec:extension_mvs}

In this section, we expand the previous analyses of the \ac{SVS} to an \ac{MVS} consisting of $\nves$ \acp{ND}, where we consider the random distribution of several vesicle parameters. Then, we analyze the obtained \ac{MVS} model and investigate its behavior in comparison to the \ac{SVS}.

\subsection{Parameter Randomness}
\label{sec:mvs_analysis}

\begin{figure}
    \centering
    \includegraphics[width=0.8\linewidth]{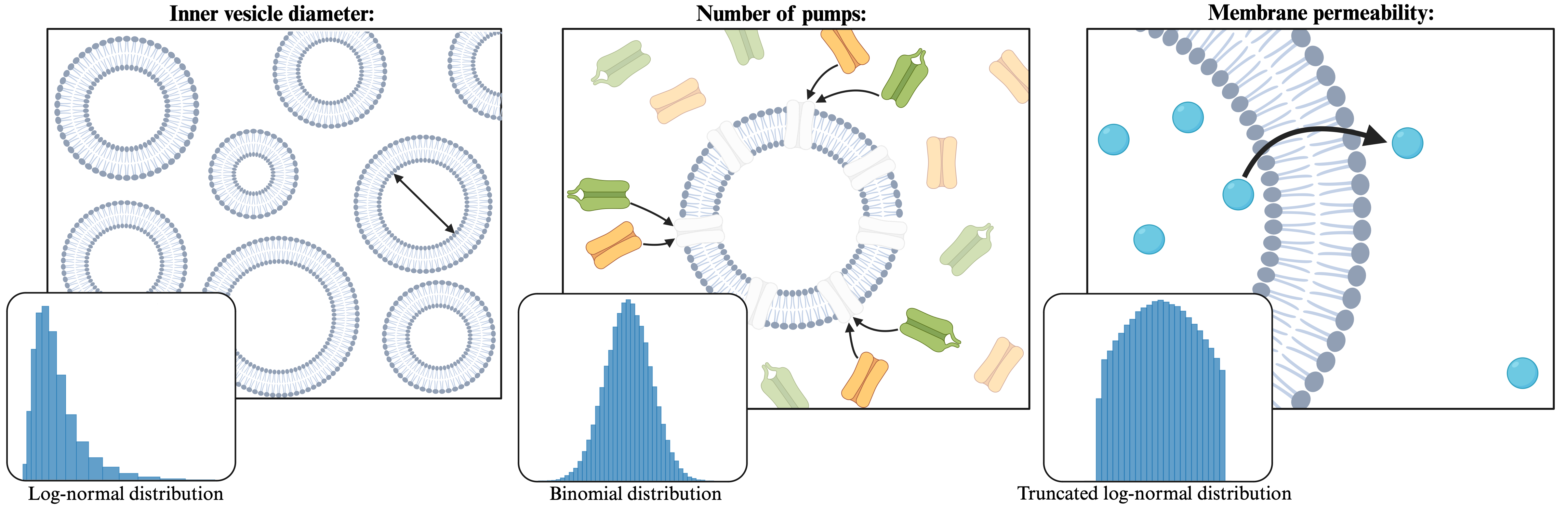}
    \vspace{-1.5em}
    \caption{Visualization of the considered random processes and distributions underlying the vesicle production process and influencing the vesicle diameter (left), the number of pumps and symporters per vesicle (center), and the membrane permeability to $\Hplus$ (right). Created with BioRender.com.}
    \label{fig:rand_proc}
\end{figure}

\noindent In \acp{MVS}, multiple vesicles are active and contribute to the release of \acp{SM} (see \cref{fig:sys_model}a). In realistic experiments, different vesicle parameters such as the vesicle diameter and the number of membrane proteins per vesicle are not deterministic but randomly distributed \cite{Poschenrieder2017}. This parameter randomness is unavoidable and mostly introduced during the vesicle production process \cite{Huang2017}. We discuss three practically relevant parameters and their distributions (see \cref{fig:rand_proc}) in the following\footnote{Note that if the parameters of all $\nves$ vesicles within an \ac{MVS} were deterministic and identical, the \ac{MVS} would collapse to an \ac{SVS} according to \eqref{eq:csouttot}.}.

\paragraph{Inner Vesicle Diameter} 
\noindent The distribution of the inner vesicle diameter, $\dinM{m}$, can be inferred from experiments and is illustrated on the left-hand side of \cref{fig:rand_proc}. As the vesicle membrane consists of lipids or polymers with characteristic lengths, the membrane thickness, $\dmem$, is typically considered to be deterministic. Therefore, variations of $\dinM{m}$ lead to different vesicle sizes. The actual diameter of the vesicles varies and it can be shown experimentally that the inner vesicle diameter of synthetic liposomes and polymersomes follows a shifted Log-normal distribution \cite{Tenchov1986}.
\noindent Consequently, to account for this randomness of the inner vesicle diameter, $\dinM{m}$, we assume 

\begin{equation}
    \dinM{m} \sim \Lognormal{x; l_{\mathrm{ves}}, \sigma_{\mathrm{ves}}, \mu_{\mathrm{ves}}}, \; f_{\mathcal{LN}}(x, \sigma_{\mathrm{ves}}, \mu_{\mathrm{ves}}) = \frac{\mathrm{e}^{-(\ln(x - l_{\mathrm{ves}}) - \mu_{\mathrm{ves}})^{2})/2\sigma_{\mathrm{ves}}^{2}}}{(x - l_{\mathrm{ves}}) \sigma_{\mathrm{ves}} \sqrt{2\pi}} \mathds{1}_{(l_{\mathrm{ves}}, +\infty)}\{x\},
    \label{eq:din_rand}
\end{equation}

\noindent for $x > l_\mathrm{ves}$, where $\Lognormal{x; l_{\mathrm{ves}}, \sigma_{\mathrm{ves}}, \mu_{\mathrm{ves}}}$ denotes the Log-normal distribution with shaping parameter $\sigma_{\mathrm{ves}} \in \mathbb{R}^{+}$, location parameter $\mu_{\mathrm{ves}} \in \mathbb{R}$, shift parameter $l_{\mathrm{ves}} \in \mathbb{R}$, and probability density function $f_{\mathcal{LN}}(\cdot)$. In \eqref{eq:din_rand}, $\ln(\cdot)$ denotes the natural logarithm. Note that $\sigma_{\mathrm{ves}}$ and $\mu_{\mathrm{ves}}$ correspond to the standard deviation and mean of the underlying Normal distribution of random variable $\ln(\dinM{m})$, respectively. Parameters $\sigma_{\mathrm{ves}}$ and $\mu_{\mathrm{ves}}$ have to be obtained from experimental data and depend on various factors such as the lipid or polymer chosen for the membrane and the vesicle production process.

\paragraph{Number of Pumps and Symporters}
\noindent Another source of randomness is the number of pumps and symporters per vesicle, whose ratio is expected to vary between individual \acp{ND} in \acp{MVS} (see center panel of \cref{fig:rand_proc}). However, to the best of our knowledge, there is no detailed description for the inherent randomness of protein reconstitution, yet. Therefore, we assume that the total number of proteins, $\ntotM{m} = \lfloor \pi (\dinM{m} + \dmem)^2 \rho_{\mathrm{ves}} \rfloor$, where $\lfloor\cdot\rfloor$ is the floor function yielding the greatest integer lower than or equal to its argument, depends linearly on the surface area of vesicle $m$ and the average density of proteins on the surface of a vesicle $\rho_{\mathrm{ves}}$, where density $\rho_{\mathrm{ves}}$ can be inferred from experimental data. There are thus $\ntotM{m} = \npM{m} + \nsymM{m}$ slots on the surface of each vesicle that may contain either a pump or a symporter, i.e., $\npM{m}$ and $\nsymM{m}$ are dependent random variables. Therefore, we assume that the number of pumps is given by a Binomial distribution

\begin{equation}
    \npM{m} \sim \Binom(\ntotM{m}, p_{\mathrm{P}}), \; f_{\Binom}(\ntotM{m}, p_{\mathrm{P}}) = \binom{\ntotM{m}}{k} p_{\mathrm{P}}^{k}(1- p_{\mathrm{P}})^{\ntotM{m} - k}, \; \nsymM{m} = \ntotM{m} - \npM{m},
    \label{eq:np_rand}
\end{equation}

\noindent where $\Binom(k; \ntotM{m}, \probPump)$ denotes the Binomial distribution with probability density function $f_{\Binom}$ for $\ntotM{m}$ samples and probability $\probPump \in [0,1]$, and $\dinM{m}$ is sampled according to \eqref{eq:din_rand}.

\paragraph{Membrane Permeability}
\noindent The third practically relevant source of randomness we consider is the permeability coefficient of $\Hplus$, as shown on the right-hand side of \cref{fig:rand_proc}. Recently, single vesicle analyses have shown that the permeability coefficient of lipid membranes with respect to $\Hplus$, $\ghlHatM{m}$, is not deterministic but rather randomly distributed over several orders of magnitude \cite{Guha2021}. For some lipids the coefficients seem to be correlated with the vesicle diameter, while for others such dependency cannot be observed. As these analyses are tedious to carry out, they cover only low numbers of samples. Thus, it is difficult to estimate the actual random distribution of permeability coefficients $\ghlHatM{m}$ from the available experimental data. We therefore assume that $\ln (\ghlHatM{m})$ is sampled from a truncated normal distribution (see right-hand side of \cref{fig:rand_proc}) where the truncation ensures that the sampled values remain in a physically feasible range. We assume that $\ln (\ghlHatM{m})$ is statistically independent from $\dinM{m}$ and therefore obtain

\begin{equation}
    \ln (\ghlHatM{m}) \sim \Truncnormal{x; \mu_{\mathrm{L}}, \sigma_{\mathrm{L}}, l_{\mathrm{L}}, u_{\mathrm{L}}}, \; f_{\mathcal{TN}}(x) = \begin{cases}
        \frac{\sqrt{2}}{\sigma_{\mathrm{L}}\sqrt{\pi}} \frac{\mathrm{exp}(-\frac{(x - \mu_{\mathrm{L}})^{2}}{2\sigma_{\mathrm{L}}^{2}})}{\mathrm{erf}(\frac{u_{\mathrm{L}} - \mu_{\mathrm{L}}}{\sigma_{\mathrm{L}} \sqrt{2}}) - \mathrm{erf}(\frac{l_{\mathrm{L}} - \mu_{\mathrm{L}}}{\sigma_{\mathrm{L}} \sqrt{2}})}, & \text{if } l_{\mathrm{L}} < x < u_{\mathrm{L}}, \\
        0, & \text{otherwise,}
    \end{cases}
    \label{eq:ghlM_dist}
\end{equation}

\noindent where $\Truncnormal{x; \mu_{\mathrm{L}}, \sigma_{\mathrm{L}}, l_{\mathrm{L}}, u_{\mathrm{L}}}$ denotes the truncated Normal distribution with probability density function $f_{\mathcal{TN}}$. Here, $\mu_{\mathrm{L}}$, $\sigma_{\mathrm{L}}$, $l_{\mathrm{L}} \in \mathbb{R}$, and $u_{\mathrm{L}} \in \mathbb{R}$ (with $u_{\mathrm{L}} > l_{\mathrm{L}}$) are the mean and standard deviation of the underlying non-truncated Normal distribution and the lower and upper truncation limits, respectively. Therefore, $\ghlHatM{m}$ is distributed in a truncated Log-normal fashion according to \eqref{eq:ghlM_dist}. In \eqref{eq:ghlM_dist}, $\mathrm{erf}(\cdot)$ denotes the Gaussian error function.

\subsection{Mathematical Analysis of \ac{MVS}}
\label{sec:ana_mvs}
\noindent In the proposed \ac{MVS} model, individual vesicles with randomly varying parameters are modeled as \acp{SVS} according to the assumption of \ac{SVS} independence (A4). To validate (A4), we utilize the extravesicular $\Hplus$ concentration in $\VOUTtot$ in the \ac{FDM} simulations, i.e., we average the $\ChoutM{m}{t}$ of all \acp{SVS}, i.e., $\ChoutTot{t} = 1/\nves \sum_{m=1}^{\nves} \ChoutM{m}{t}$. The \ac{FDM}, thus, serves as a baseline for our analytical approximations which are based on the assumption that $\VOUTtot$ is a well-mixed medium.

In order to characterize the behavior of an \ac{MVS}, we provide the following theorems.

\begin{theorem}
    \normalfont Given an \ac{MVS} that contains $\nves > 1$ vesicles where the parameter $\din$, $\np$, $\nsym$, and $\ghlHat$ are randomly distributed with a variance larger than 0, its behavior cannot be modeled by an \ac{SVS} with the mean parameter values $\dinMean$, $\npMean$, $\nsymMean$, and $\ghlHatMean$.
    \label{theo:t1}
\end{theorem}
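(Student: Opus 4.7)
The plan is to argue by nonlinearity: the per-vesicle output $\CsoutM{m}{t}$ depends nonlinearly on the random parameters $\dinM{m}$, $\npM{m}$, $\nsymM{m}$, and $\ghlHatM{m}$, so averaging the outputs (as the \ac{MVS} does via \eqref{eq:csouttot}) does not equal evaluating the output at the averaged parameters (which is what an \ac{SVS} with mean parameters would produce). Formally, let $g(\din,\np,\nsym,\ghlHat;t)$ denote the map from a single vesicle's parameters to its extravesicular substrate concentration at time $t$, obtained from \eqref{eq:csin_aci}--\eqref{eq:variedc} together with the conversion $\CsoutM{m}{t}=(\NsM{m}-\CsinM{m}{t}\VinM{m})/\VoutM{m}$. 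Then the \ac{MVS} yields $\Csouttot{t}=\nves^{-1}\sum_{m=1}^{\nves} g(\dinM{m},\npM{m},\nsymM{m},\ghlHatM{m};t)$, whereas an \ac{SVS} evaluated at mean parameters yields $g(\dinMean,\npMean,\nsymMean,\ghlHatMean;t)$. The theorem therefore reduces to exhibiting a time $t^{*}$ at which these two quantities differ.

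First I would establish that $g$ is strictly nonlinear in at least one of its arguments. Several structural nonlinearities are immediate from \cref{sec:study,sec:solutions}: the intravesicular volume scales as $\Vin\propto\din^{3}$ while the membrane area scales as $\din^{2}$, so the aggregate leakage rate $\ghl=\ghlHat A_{\mathrm{ves}}$ and the auxiliary constants $\jhla$, $\jhlb$ in Proposition~1 are rational (non-affine) functions of $\din$; furthermore, $\Chin{t}$ in \eqref{eq:chin_aci} and \eqref{eq:ana_chin} contains the exponential factor $\mathrm{e}^{-a(t-\tSecStart{t})}$ in which $a$ is itself a non-affine function of $\din$ and $\ghlHat$, and $\Csin{t}$ in \eqref{eq:csin_aci} involves a Lambert-$\mathrm{W}$ term. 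Any one of these nonlinearities suffices to ensure that $g$ is not affine in the random parameter in question on the support of its distribution.

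Next I would invoke Jensen's inequality for a strictly convex (or strictly concave) slice of $g$. Concretely, fix $t^{*}$ small enough that the system is still in phase (P2) of the first cycle with probability one, so that the closed-form \eqref{eq:ana_chin} applies with the same active flux structure for all vesicles. On this slice, $\Chin{t^{*}}$ is a strictly convex function of $\ghlHat$ (through the composition of the exponential with the affine dependence of $a$ on $\ghlHat$), so by Jensen
\begin{equation*}
\mathbb{E}\bigl[\Chin{t^{*}}\mid \ghlHat\bigr] \;>\; \Chin{t^{*}}\Bigl|_{\ghlHat=\mathbb{E}[\ghlHat]},
\end{equation*}
with strict inequality because $\mathrm{Var}(\ghlHat)>0$ by assumption. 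Propagating this gap to $\CsoutM{m}{t}$ through the (monotone, continuous) coupling in \eqref{eq:chin}--\eqref{eq:csin} yields a strict discrepancy between $\mathbb{E}[g]$ and $g$ evaluated at the mean parameters, which by the law of large numbers (as $\nves\to\infty$) transfers to the empirical average $\Csouttot{t^{*}}$ used in \eqref{eq:csouttot}. Since this discrepancy is strict at $t^{*}$, no choice of deterministic mean-parameter \ac{SVS} can reproduce the \ac{MVS} trajectory, proving the claim.

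The main obstacle is the last step: propagating a Jensen gap from an auxiliary quantity (say $\Chin{t^{*}}$ as a function of $\ghlHat$) into the quantity of interest $\CsoutM{m}{t}$ in a way that guarantees the gap does not cancel against opposing contributions from the other three random parameters. I would handle this by conditioning on all but one parameter (so that the remaining variability is carried by $\ghlHat$ alone) and noting that the conditional map $\ghlHat\mapsto\CsoutM{m}{t^{*}}$ is strictly monotone with nonzero curvature, so the strict Jensen inequality survives marginalisation. A clean alternative, which avoids any delicate curvature bookkeeping, is a counterexample argument: pick two vesicles with parameter tuples $\theta_{1}\neq\theta_{2}$ in the support of the joint distribution and verify numerically (or by expansion of \eqref{eq:ana_chin} to second order in $t^{*}$) that $\tfrac{1}{2}[g(\theta_{1};t^{*})+g(\theta_{2};t^{*})]\neq g(\tfrac{1}{2}(\theta_{1}+\theta_{2});t^{*})$, which already contradicts the hypothesis that the \ac{MVS} is representable as a mean-parameter \ac{SVS}.
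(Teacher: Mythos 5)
Your core strategy -- exploit a nonlinearity in the map from a random vesicle parameter to the per-vesicle output and apply Jensen's inequality to show that averaging outputs differs from evaluating at averaged parameters -- is exactly the strategy of the paper's proof. The difference lies in the instantiation, and your choice creates a gap that the paper's choice avoids. The paper takes the random variable to be $\nsym$ and works \emph{directly} with the substrate concentration: writing $\Csin{t,X}=\beta_1 \lamW{\beta_2\exp(\beta_3-\beta_4 X(t-\tSecStart{t}))}$ with $X=\nsym$, it computes the second derivative in $X$ in closed form, observes it is non-negative (the Lambert-$\mathrm{W}$ term is positive for positive arguments), and concludes convexity of the quantity of interest itself. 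Since $\CsoutM{m}{t}$ is affine in $\CsinM{m}{t}$ once the volumes are fixed, the Jensen gap transfers immediately to the released substrate concentration with no further work.

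You instead establish convexity of $\Chin{t^{*}}$ in $\ghlHat$ and then must ``propagate'' the Jensen gap through the coupled system \eqref{eq:chin}--\eqref{eq:csin} into $\CsoutM{m}{t}$. This is the genuine gap, and you correctly identify it as the main obstacle, but your proposed fix does not close it. The map from the parameter $\ghlHat$ to $\CsoutM{m}{t^{*}}$ is not a pointwise composition of $\ghlHat\mapsto\Chin{t^{*}}$ with some scalar function; it is a functional of the entire trajectory $\Chin{\cdot}$ on $[0,t^{*}]$ (the symport flux in \eqref{eq:ihsym} switches on the indicator $\mathds{1}_{[\ChinSwitch,\infty)}(\Chin{t})$, so $\Csin{t^{*}}$ depends on \emph{when} the threshold is crossed, not on the value $\Chin{t^{*}}$). ``Strictly monotone with nonzero curvature'' is therefore not a property you can invoke for this composition without substantial additional argument, and monotone transformations do not in general preserve the sign, or even the strictness, of a Jensen gap. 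Your fallback counterexample argument (exhibiting $\theta_1\neq\theta_2$ with $\tfrac12[g(\theta_1)+g(\theta_2)]\neq g(\tfrac12(\theta_1+\theta_2))$) is logically sound and would suffice, but as stated it defers the actual verification to numerics or an unperformed Taylor expansion, so it is a proof sketch rather than a proof. The clean repair is precisely what the paper does: pick a parameter ($\nsym$) that enters the closed-form expression for the output concentration itself, so that convexity can be certified by an explicit second derivative rather than propagated through the dynamics.
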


\begin{proof}
    See \cref{sec:proof_mvs}.
\end{proof}

If we consider multiple \acp{MVS} (e.g., representing different experimental batches) with the same underlying random distributions of $\np$, $\din$, and $\ghlHat$, we can calculate the unbiased estimate of the inter-experiment variance among the \acp{MVS}. This variance indicates how similar different experimental batches behave, i.e., how reproducible an experiment is. Therefore, we define the inter-experiment variance for a signal, e.g., the extravesicular $\Sub$ concentration, as follows

\begin{equation}
	\mathrm{Var}\{\Csout{t}\} = \frac{1}{\nEx - 1} \sum_{q = 1}^{\nEx} \left(\CsoutQ{q}{t} - \frac{1}{\nEx} \sum_{q' = 1}^{\nEx}\CsoutQ{q'}{t} \right)^{2}
	\label{eq:interves_var}
 \end{equation}

\noindent where variable $q$ indicates the $q$-th experiment. In \eqref{eq:interves_var}, $\frac{1}{\nEx}\sum_{q' = 1}^{\nEx} \CsoutQ{q'}{t}$ constitutes the inter-experiment average. We make the following observation regarding the inter-experiment variance:

\begin{theorem}
\normalfont Given multiple \acp{MVS} with the same underlying distributions of $\din$, $\np$, and $\ghlHat$, $\mathrm{Var}\{\Csout{t}\}$ in \eqref{eq:interves_var} approaches 0 as the number of vesicles in the \ac{MVS} tends to infinity ($\nves \rightarrow \infty$).
\label{theo:t2}
\end{theorem}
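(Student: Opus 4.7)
The plan is to view the sample variance in \eqref{eq:interves_var} as the empirical variance of $\nEx$ realizations of the random variable $\CsoutQ{q}{t}$, and to prove that this random variable itself degenerates to a deterministic limit as $\nves\to\infty$ via a law-of-large-numbers argument. Under the standing assumption that all \acp{MVS} share the same underlying distributions of $\din$, $\np$, $\nsym$, and $\ghlHat$, every \ac{MVS} is generated in the same way; by \Cref{sec:mvs_analysis} and (A4), the per-vesicle signals $\{\CsoutM{m,q}{t}\}_{m=1}^{\nves}$ are independent and identically distributed within each experiment $q$, and also across $q$.

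First, I would use \eqref{eq:csouttot} (i.e.\ $\CsoutQ{q}{t} = \frac{1}{\nves}\sum_{m=1}^{\nves}\CsoutM{m,q}{t}$) to write the per-experiment observable as a sample mean of $\nves$ i.i.d.\ random variables, and bound the variance of a single summand $\CsoutM{m}{t}$ using the fact that both $\CsinM{m}{t}$ and $\CsoutM{m}{t}$ are bounded above by $\NsM{m}/\VoutM{m}$ (the substrate is conserved by (A1) and only transported outward by the release module). Since the underlying vesicle parameters have finite variances by \eqref{eq:din_rand}--\eqref{eq:ghlM_dist} and the fluxes in \eqref{eq:flux_p}--\eqref{eq:flux_l} depend continuously on them, $\mathrm{Var}\{\CsoutM{m}{t}\}=:\sigma_{\mathrm{S}}^{2}(t)$ is finite and independent of $m$.

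Next, I would compute directly
\begin{equation*}
\mathbb{E}\bigl[\CsoutQ{q}{t}\bigr] = \mathbb{E}\bigl[\CsoutM{m}{t}\bigr]=:\mu_{\mathrm{S}}(t), \qquad
\mathrm{Var}\bigl\{\CsoutQ{q}{t}\bigr\} = \frac{\sigma_{\mathrm{S}}^{2}(t)}{\nves},
\end{equation*}
which shows that $\CsoutQ{q}{t}\to\mu_{\mathrm{S}}(t)$ in mean square (hence in probability) as $\nves\to\infty$. Because the sample variance in \eqref{eq:interves_var} is, for fixed $\nEx$, an unbiased estimator of $\mathrm{Var}\{\CsoutQ{q}{t}\}$, its expectation equals $\sigma_{\mathrm{S}}^{2}(t)/\nves$ and vanishes in the limit; by the continuous mapping theorem applied to the quadratic form in \eqref{eq:interves_var}, $\mathrm{Var}\{\Csout{t}\}\to 0$ in probability as well, which is the claim.

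The main obstacle, in my view, is not the probabilistic core of the argument (which is a routine LLN) but the careful bookkeeping of the $\nves$-dependence of $\VoutM{m}$ through (A5): because $\VoutM{m}=\Vouttot/\nves$ shrinks with $\nves$, one has to argue that $\sigma_{\mathrm{S}}^{2}(t)$ admits a uniform-in-$\nves$ bound in the physically relevant regime $\VoutM{m}\gg\VinM{m}$ assumed throughout \Cref{sec:ass}. Once this uniform bound is in place, the $1/\nves$ scaling above closes the proof; otherwise one would have to combine the LLN with an explicit control of how $\CsoutM{m}{t}$ scales with the shrinking sub-volume, which would amount to rewriting each summand in terms of the per-vesicle released substrate amount $\NsM{m}-\CsinM{m}{t}\VinM{m}$ before averaging.
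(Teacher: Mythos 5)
Your proposal follows essentially the same route as the paper's proof: by (A5), $\CsoutQ{q}{t}$ is the empirical average of the i.i.d.\ per-vesicle signals $\CsoutM{m,q}{t}$, the law of large numbers sends it to $\mathcal{E}\{\CsoutM{m}{t}\}$, which is the same for every experiment $q$, and hence the sample variance in \eqref{eq:interves_var} vanishes. Your version is more quantitative (the explicit $\sigma_{\mathrm{S}}^{2}(t)/\nves$ scaling, unbiasedness of the estimator, convergence in probability), whereas the paper states the limit in one line and stops there. The one substantive point where you go beyond the paper is your closing caveat: under (A5) the sub-volume $\VoutM{m}=\Vouttot/\nves$ shrinks as $\nves\to\infty$, so the summands $\CsoutM{m}{t}$ are not identically distributed \emph{across} different values of $\nves$ and your bound $\NsM{m}/\VoutM{m}$ is not uniform in $\nves$; the paper's proof silently applies the LLN as if the per-vesicle law were $\nves$-independent. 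This is a real gap in both arguments as literally stated; it closes if one either lets $\Vouttot$ grow proportionally to $\nves$ (fixed vesicle density) or, as you suggest, rewrites each summand as the released amount $\NsM{m}-\CsinM{m}{t}\VinM{m}$, whose dynamics are (approximately) $\nves$-independent in the regime $\VoutM{m}\gg\VinM{m}$ invoked in (A4)--(A5). Flagging this explicitly is a genuine improvement over the paper's proof.
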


\begin{proof}
Given (A5), $\CsoutQ{q}{t}$ is the empirical average over $\CsoutM{m,q}{t}$ per experiment $q$. As $\nves$ tends to infinity, we obtain $\lim_{\nves \rightarrow \infty} \CsoutQ{q}{t} = \mathcal{E}\{\CsoutM{m}{t}\}$. As the vesicle parameters are sampled from the same distributions among all experiments, $\mathcal{E}\{\CsoutM{m}{t}\}$ is independent of $q$. Consequently, the argument in the sum over $q$ in \eqref{eq:interves_var} becomes 0 and so does $\mathrm{Var}\{\Csout{t}\}$.
\end{proof}

\FloatBarrier

\section{Characterization of \ac{ND}-based \ac{MC} Transmitters}
\label{sec:sim}

\noindent In this section, the results for both the \ac{SVS} and the \ac{MVS} obtained for the exact analytical solution \eqref{eq:csin_aci} and \eqref{eq:chin_aci} and the closed-form approximation \eqref{eq:csin_ana} and \eqref{eq:ana_chin} are presented and compared to the ground truth, which is numerically obtained by \ac{FDM}. First, we investigate the impact of the buffer molarity on the dynamics of the energizing module of an \ac{SVS}. Then, the functionality of the entire \ac{SVS} is examined for varying ratios of the numbers of pumps and symporters. As the transport rate of proteins cannot be changed straightforwardly, the numbers of pumps and symporters in the vesicle membrane are important design parameters since they directly scale the fluxes of $\I$ and $\Sub$ (see \eqref{eq:flux_p} and \eqref{eq:ihsym}). Lastly, for the \ac{SVS}, the influence of different transport rates of symporters (corresponding to different symporter types) and different membrane permeabilities on the release of $\Sub$ for different illumination durations is discussed.
\\ \noindent For the analysis of \acp{MVS}, the differences among the vesicles within one experiment are studied in comparison to the joint behavior of an \ac{MVS} over several experimental realizations. Then, a sensitivity analysis is conducted to determine critical experimental parameters that heavily influence the system behavior upon variation.
\\ \noindent If not specified otherwise, all simulations are conducted using the default parameters shown in \Cref{tab:params}. These default values are chosen to be in line with experimental data, if available. The time step $\step t$ is relevant for the numerical \ac{FDM} baseline, which requires time discretization.

\begin{table}[!ht]
    \centering
    \caption{Default parameters for the evaluations.}
    \vspace{-1em}
    \resizebox{\columnwidth}{!}{
    \begin{threeparttable}
    \begin{tabular}{|l|r@{\hspace{1pt}}l|c|}
        \hline
        \textbf{Parameter} & \multicolumn{2}{c|}{\textbf{Value}} & \textbf{Ref.}\\
        \hline
        \hline
        $\step t$ & $\num{1e-2}$ & $\si{\second}$ &  \\ \hline
        $\CsinZERO$ & \num{300} & \si{\mol\per\cubic\meter} & $\ast$\\ \hline
        $\CZERO$ & \num{20} & \si{\mol\per\cubic\meter} & \cite{Goers2018} \\ \hline
        $\ka$ & $\num{6.2e-5}$ & \si{\mol \per \cubic \meter} & \cite{Ellison1958} \\ \hline
        $\din$ & $\num{87}$ & $\si{\nano\meter}$ & \cite{Rideau2018} \\ \hline
        $\dmem$ & $\num{14}$ & $\si{\nano\meter}$ & \cite{Rideau2018} \\ \hline
        $\Vouttot$ & $\num{1e-6}$ & $\si{\cubic\meter}$ & \cite{Goers2018} \\ \hline
        $\ChinZERO$ & $\num{3.98e-5}$ & $\si{\mol\per\cubic\meter}$ &  $\ast\ast$\\ \hline
        $\ChoutZERO$ & $\num{3.98e-5}$ & $\si{\mol\per\cubic\meter}$ &  $\ast\ast$\\ \hline
        $\Km$ & $\num{1.3e-2}$ & $\si{\mol\per\cubic\meter}$ & \cite{Levy1998} \\ \hline
    \end{tabular}
    \end{threeparttable}
    \begin{threeparttable}
    \begin{tabular}{|l|r@{\hspace{1pt}}l|c|}
        \hline
        \textbf{Parameter} & \multicolumn{2}{c|}{\textbf{Value}} & \textbf{Ref.}\\
        \hline
        \hline
        $\fSym$ & $3$ & & \cite{Ryan2009} \\ \hline
        $\gssymHat$ & $\num{0.006}$ & $\si{\per\second}$ & \cite{Tubbe1992}\\ \hline
        $\nsym$ & $30$ & & \cite{Goers2018} \\ \hline
        $\ghl$ & $\num{3e-6}$ & $\si{\meter \per \second}$ & \cite{Deamer1987} \\ \hline
        $\ghpHat$ & $\num{0.03}$ & $\si{\per \second}$ & \cite{Dioumaev2003} \\ \hline
        $\np$ & $40$ & & \cite{Goers2018} \\ \hline
        $\thresh$ & 0.015 & & $\dag$ \\ \hline
        $\nves$ & $10^{11}$ & & \cite{Gaitzsch2019} \\ \hline
        $\nMod$ & 100 & & \\ \hline
        $\nEx$ & 10 & & \\ \hline
    \end{tabular}%
    \end{threeparttable}
    \begin{threeparttable}
    \begin{tabular}{|l|r@{\hspace{1pt}}l|c|}
        \hline
        \textbf{Parameter} & \multicolumn{2}{c|}{\textbf{Value}} & \textbf{Ref.}\\
        \hline
        \hline
        $\mu_{\mathrm{ves}}$ & 4.16 & $\log(\si{\nano\meter})$ & $\ddag$ \\ \hline
        $\sigma_{\mathrm{ves}}$ & 0.62 & & $\ddag$ \\ \hline
        $l_{\mathrm{ves}}$ & 39.74 & \si{\nano\meter} & $\ddag$ \\ \hline
        $\rho_{\mathrm{tot}}$ & 0.168 & \si{\per\square\nano\meter} & \\ \hline
        $p_{\mathrm{P}}$ & 4/7 & & \\ \hline
        $\mu_{\mathrm{L}}$ & -5.52 & & \\ \hline
        $\sigma_{\mathrm{L}}$ & 0.25 & & \\ \hline
        $l_{\mathrm{L}}$ & -5.77 & & \\ \hline
        $u_{\mathrm{L}}$ & -5.27 & & \\ \hline
        \multicolumn{4}{c}{}
    \end{tabular}
    \end{threeparttable}}
    \label{tab:params}
    \smallskip
\begin{minipage}{\linewidth}\footnotesize
\begin{itemize}[leftmargin=0.5cm, itemsep=-4pt, topsep=16pt]
\vspace{2pt}
    \item[$\ast$] Expected concentration that can be loaded into vesicles based on reported, similarly high encapsulation concentrations \cite{Nallani2006}. 
   \item[$\ast\ast$] Corresponds to a pH of 7.4, which is standard for some commonly used buffers \cite{Goers2018}.
   \item[$\dag$] Rough estimate for a symporter that is active at the resting membrane potential of a cell.
   \item[$\ddag$] Value is obtained from curve fitting on experimental data (refer to \cref{ssec:mvs_results} for details).
\end{itemize}
\end{minipage}
\vspace{-1em}
\end{table}

\subsection{Energizing Module: Evaluation of the Impact of the Buffer}
\label{sec:mol}

\noindent In order to assess the functionality of the proposed $\Hplus$-based energizing module under varying experimental conditions, in our first evaluation, we consider an \ac{ND} without release module, i.e., $\nsym = 0$, for different buffer molarities $\CZERO$. \Cref{fig:mol} shows $\Chin{t}$ as a function of time $t$ for various buffer concentrations $\CZERO$ as obtained from the exact analytical solution \eqref{eq:chin_aci} (green), the approximate solution \eqref{eq:ana_chin} (orange), and the numerical baseline (blue). The results were obtained for $\SI{600}{\s}$ of continuous illumination followed by $\SI{600}{\s}$ without illumination, as indicated by the shading in \cref{fig:mol}. We observe that the general signal shapes are in agreement with experimental data from the literature employing a similar setup using light-driven $\Hplus$ pumps (compare \cref{fig:mol} with \cite[Fig. 3]{Harder2024}). In both our simulations and the experimental measurements, illumination causes an increase of $\Chin{t}$ (i.e., a decrease in the intravesicular pH) that approaches saturation exponentially. Dark phases similarly cause a return to the initial value of $\Chin{t}$. This suggests that our developed system model successfully captures the behavior of the envisioned \ac{ND}. \Cref{fig:mol} also highlights the importance of modeling buffering effects, as the system dynamics of a buffered system ($\CZERO > \SI{0}{\mol\per\cubic\meter}$) clearly differ from those of an unbuffered system ($\CZERO = \SI{0}{\mol\per\cubic\meter}$). We observe that a higher buffer molarity causes a smaller slope of $\Chin{t}$ during both illumination ($l(t) = 1$) and no illumination ($l(t) = 0$). As expected, for increasing buffer molarity, the $\Hplus$ in- and outfluxes are more attenuated by the buffer and therefore the rate of change in $\Chin{t}$ is smaller. It is noteworthy that, for all buffer molarities, $\Chin{t}$ approaches the same value for sufficiently long illumination durations. 
This value is the dynamic equilibrium concentration $\ChinEq$ (black line in \cref{fig:mol}), where the influx $\ihp{t}$ caused by the pumps is equal to the outflux $\ihl{t}$ caused by the leakage. As all $\Hplus$ molecules entering or leaving $\VIN$ are equally buffered, $\ChinEq$ is unaffected by the buffer molarity. However, as can be observed in \cref{fig:mol}, the speed at which the equilibrium is reached changes, e.g., the curve for $\CZERO = \SI{100}{\mol\per\cubic\meter}$ does not reach $\ChinEq$ as the illumination period considered in \cref{fig:mol} is too short. Generally,  \cref{fig:mol} shows that the analytical solutions are in good agreement with the numerical baseline. Interestingly, the closed-form approximation is very accurate while entailing much lower computational costs compared to the other solutions. Only when the changes in $\Chin{t}$ are large in the buffered scenario, e.g., for low buffer molarities (e.g., $\CZERO = \SI{10}{\mol\per\cubic\meter}$), small deviations between the analytical solutions and the \ac{FDM} solution can be observed. 
The reason for these deviations is the phase-wise constant attenuation factor $\attSec{\tSecStart{t}}$ in \eqref{eq:attSec}. As long phases typically lead to substantial changes in $\Chin{t}$ due to $\Hplus$ in- or outflux, the assumption of constant $\Chin{t}$ used for calculating $\attSec{\tSecStart{t}}$ in \eqref{eq:attSec} becomes invalid. The deviation decreases again when $\ChinEq$ is reached as no $\Chin{t}$ changes   occur anymore as long as the cycle phase does not change.
Conclusively, we note that the buffer molarity of the system determines how responsive the energizing module is with respect to changes in the external stimulus. In particular, higher buffer molarities introduce a latency to the stimulus response, while low buffer molarities lead to a more responsive system.

\subsection{Energizing and Release Module: Evaluation of the \ac{ND} Functionality}

\begin{figure}
\centering
\begin{minipage}[b]{.47\textwidth}
  \centering
  \includegraphics[width=0.99\textwidth]{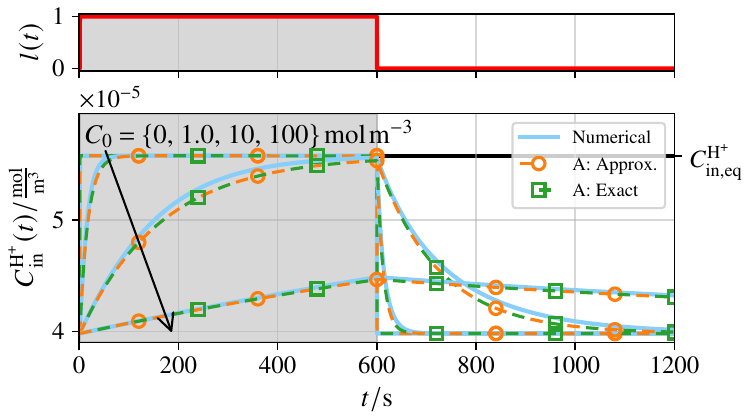}
  \vspace{-2.5em}
  \caption{Intravesicular $\Hplus$ concentration, $\Chin{t}$, (bottom) for one illumination period without release module, i.e., $\nsym = 0$, and for varying buffer molarities $\CZERO$. Results obtained with \ac{FDM} (blue), the exact analytical solution \eqref{eq:chin_aci} (green), and the approximate analytical solution \eqref{eq:ana_chin} (orange) are shown as a function of time $t$. The light signal $l(t)$ (red) is shown on the top. Shaded gray areas indicate times during which $l(t) = 1$. The black line shows the $\Hplus$ concentration, $\ChinEq$, where in- and outflux to/from the vesicle are in equilibrium while $l(t) = 1$.}
  \label{fig:mol}
\end{minipage}
\hspace{0.2em}
\begin{minipage}[b]{.47\textwidth}
  \centering
    \includegraphics[width=0.99\linewidth]{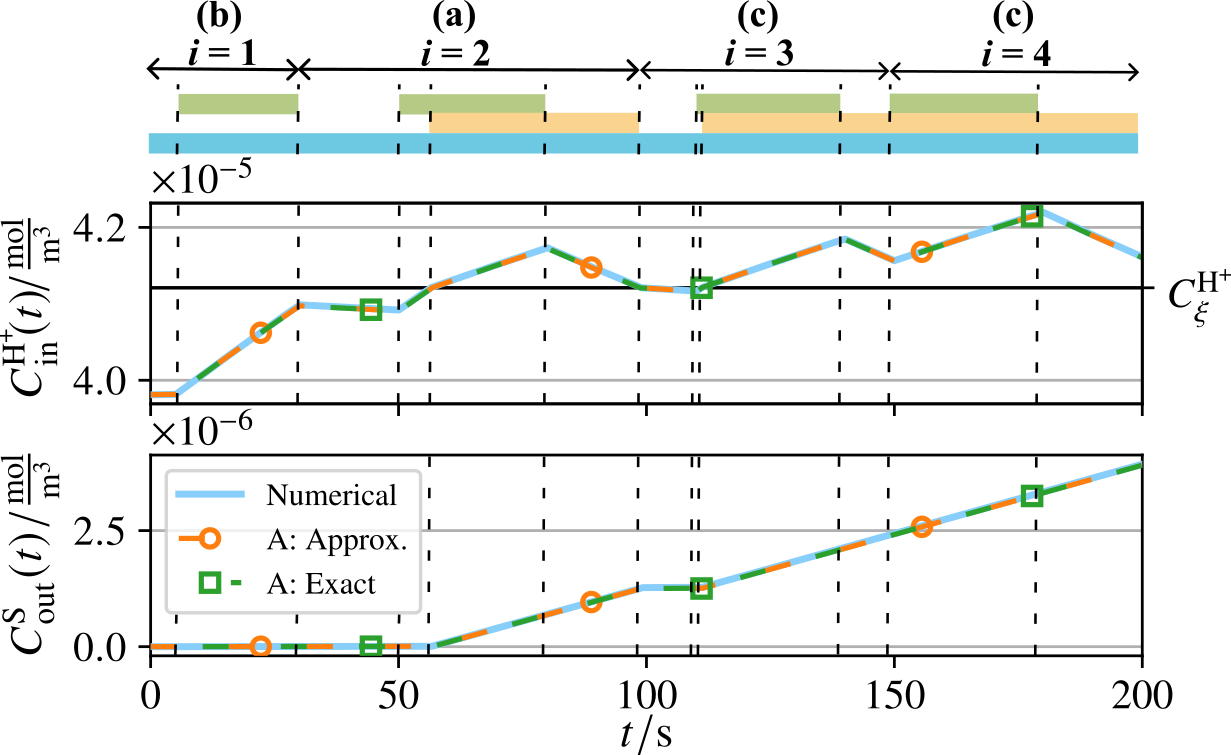}
    \vspace{-2.5em}
    \caption{Intravesicular $\Hplus$ concentration (middle panel) and extravesicular $\Sub$ concentration (bottom panel) for different cycle types (a), (b), and (c) and solution approaches. The active system components during each cycle phase are shown on top (green: proton pumps, orange: symporters, blue: leakage) and dashed lines indicate cycle phase limits. The intravesicular $\Hplus$ concentration required for symport activation, $\ChinSwitch$, is shown by the horizontal black line in the middle panel. 
    \\ \\ }
    \label{fig:types}
\end{minipage}%
\end{figure}

\paragraph{Accurate Analytical Solutions for all Cycle Types}

\noindent In comparison to \cref{fig:mol}, which showed a scenario without release modules, we now consider a system with $\nsym > 0$. \Cref{fig:types} shows the system response to a light signal chosen such that all three cycle types (a)--(c) occur (see \cref{fig:cycle1}). Evidently, for all cycles types, the analytical and numerical solutions are in good agreement. This indicates that our model adequately captures the system behavior for different light signal patterns. The first illumination phase in \cref{fig:types} is not long enough for $\ChinSwitch$ to be reached, thus the symporters remain inactive and a cycle of type (b) occurs. The symporter inactivity can also be observed in the amount of released $\Sub$ (bottom panel of \cref{fig:types}), which remains unchanged during $t < \SI{50}{\s}$. For future experiments, it is necessary to carefully design the light signal in such a way that cycles of type (b) (see \cref{fig:cycle1}b) are avoided, as the goal of \ac{SM} release is not achieved. The illumination phase of cycle $i = 2$ is of similar length to that of cycle $i = 1$, but due to a higher initial concentration $\ChinStart{t = \tsp{2}}$ when cycle $i = 2$ starts at $t = \SI{50}{\s}$, the symporters become active during this cycle. We observe that the slope of $\Chin{t}$ decreases and that $\Csout{t}$ increases linearly when the symporters are active, as $\isr{t} > 0$ for $\Chin{t} > \ChinSwitch$. During the subsequent dark phase ($t \in [80, 110]\si{\s}$), the symporters become inactive at $t \approx \SI{100}{\s}$ (see sudden stop of increase in $\Csout{t}$). Therefore, cycle $i = 2$ is a regular cycle of type (a). The following cycles $i = 3$ and $4$ are of type (c) as the dark phase ($t \in [140, 150]\si{\s}$) between the illumination phases is too short to stop the symporter activity (see continuous increase in $\Csout{t}$ for $t > \SI{110}{\s}$). \Cref{fig:mol} shows that our derived analytical solutions are in good agreement with the numerical baseline for all cycle types and are, therefore, expected to model the system at hand reliably.

\paragraph{Impact of the Composition of the Membrane Proteins over Time}

\noindent \Cref{fig:ratios} shows $\Chin{t}$, $\Csout{t}$, and the outflux of $\Sub$ caused by the release module, $\isr{t}$, for different ratios of $\np$ and $\nsym$, while the total number of membrane proteins $\ntot$ remains constant. Generally, we observe that a smaller number of symporters leads to larger $\Chin{t}$ during the illumination phases due to a lower symport-caused $\Hplus$ outflux. Similarly, the influence of a lower number of symporters is also observable from the smaller slope of $\Csout{t}$ (see center panel of \cref{fig:ratios}) or, equivalently, from the lower amplitude of $\isr{t}$ during illumination periods (bottom panel of \cref{fig:ratios}). However, the higher peaks of $\Chin{t}$ for smaller $\nsym$ lead to a longer symport duration, $\tesym{i} - \tssym{i}$, in each cycle as shown by the increasing width of the rectangles in $\isr{t}$ for decreasing $\nsym$. These observations lead to the conclusion that a lower number of symporters does not necessarily correlate with an overall lower amount of released $\Sub$ (which is proportional to the area of the rectangles in $\isr{t}$) as a lower outflux rate causes longer symport durations.
For the case $\np/\nsym = 1$, we also observe the effect of substrate depletion, i.e., $\Csin{t}$ approaches 0, in \cref{fig:ratios}. To investigate this phenomenon, we have chosen a low $\CsinZERO = \SI{3.14}{\mol\per\cubic\meter}$ for which substrate depletion takes place at around $t = \SI{6500}{\s}$. However, in practice, larger $\CsinZERO$ are achievable and should be used to increase the longevity of the \ac{ND} (see \Cref{tab:params}).
For $\np/\nsym = 4/3$, the substrate is depleted even earlier as evident from the fact that $\isr{t} = 0$ for all plotted times $t > \SI{6350}{\s}$ (highlighted in red in \cref{fig:ratios}). On the other hand, substrate depletion is not reached during the simulation time for $\np/\nsym = 3/4$. Consequently, the rectangular signal shape of $\isr{t}$ is maintained until the end of the simulation. We also note that during substrate depletion, the accuracy of the approximate solution \eqref{eq:csin_ana} decreases (see mismatch between the blue and orange curves in the bottom panel of \cref{fig:ratios} for $\np = \nsym$) as the linear approximation is not able to capture the decrease in the symport rate characteristic for Michaelis-Menten kinetics (see \eqref{eq:ihsym} for $\Csin{t} \ll \Km$). In contrast, the exact analytical solution \eqref{eq:chin_aci} is able to capture this effect, and the slight deviations from the numerical baseline are attributed to the finite time resolution in the numerical integration for obtaining $\alpha(t)$ in \eqref{eq:variedc}. 
Note that the light signal in \cref{fig:ratios} may be interpreted as a modulated transmit signal for concentration shift keying. Since the difference in $\isr{t}$ during illumination and dark phases mimics the shape of the optical transmit signal, the resulting signal may be suitable for encoding information. Generally, \cref{fig:ratios} shows that the ratio $\np/\nsym$ is an important design parameter of the \ac{ND}. For example, if the envisioned use case of the \ac{ND} requires a prolonged, sustained release of $\Sub$ upon illumination (wide rectangles), large $\np/\nsym$ should be chosen, while for shorter temporal responses to the external stimuli (narrow rectangles) small $\np/\nsym$ are favorable.

\subsection{Estimation of Substrate Release}
\noindent In order to utilize the proposed \ac{ND} as \ac{TX} for synthetic \ac{MC} for applications such as \ac{TDD} or information transmission, it is necessary to design the release of $\Sub$ adequately in consideration of the limited $\Sub$ resources inside the vesicle. One important variable in this context is the expected number of released $\Sub$ in response to a specific illumination duration, which scales linearly with the expected duration of symport during an illumination cycle (while $\Csin{t} \gg \Km$). The closed-form expression \eqref{eq:tssymtheo} allows for the estimation of the start $\tssymNoi$ and end $\tesymNoi$ times of the symport (see \cref{sec:limits}). To validate the results, we compare the values obtained from \eqref{eq:tssymtheo} with the numerically simulated symport duration. The top panel of \cref{fig:sym_dur} shows the symport duration, $\tesymNoi - \tssymNoi$, and the bottom panel shows the final extravesicular $\Sub$ concentration $\Csout{\tepNoi}$ for varying $\ghlHat$ and $\gssymHat$, which may correspond to different types of symporters and vesicle membranes, respectively. Note that a minimum illumination duration (denoted by the vertical black line in \cref{fig:sym_dur}) is required to reach $\ChinSwitch$ and to ensure that a cycle exhibits symporter activity. First, we observe that the symport duration increases approximately linearly during illumination after the minimum required illumination time. Moreover, we observe that the higher $\gssymHat$, the lower is the symport duration for a given illumination duration as the symporter-caused $\Hplus$ outflux is larger. Similarly, a higher leakage rate $\ghlHat$ (green curves in \cref{fig:sym_dur}) shortens the symport duration as it also causes a larger $\Hplus$ outflux. For a given $\gssymHat$, the total concentration of released $\Sub$ during the illumination phase, $\Csout{\tepNoi}$, is therefore lower for larger $\ghlHat$ (see bottom panel in \cref{fig:sym_dur}). In practice, $\ghlHat$ depends on the type of vesicle membrane (e.g., lipid or polymeric) and the choice of $\I$ and can vary substantially. Hence, its impact has to be considered carefully in experimental design. When both $\ghlHat$ and $\gssymHat$ are small, i.e., $\ghlHat = \SI{5e-6}{\meter\per\s}$ and $\gssymHat \le \SI{0.005}{\per\s}$ (blue curve in top panel), the curves for the symport duration do not exhibit linear behavior. Instead, the symport duration first increases quickly but then more slowly as the illumination duration increases. 
Moreover, we observe a slight deviation between our analytical approximation for the symport duration (dashed lines in \cref{fig:sym_dur}) and the duration obtained from \ac{FDM} (solid lines in \cref{fig:sym_dur}) for $\ghlHat = \SI{5e-6}{\meter\per\s}$ and $\gssymHat \le \SI{0.005}{\per\s}$. This deviation is caused by a large $\ChinEq$ for a low $\Hplus$ outflux which causes the analytical approximation for the buffer effect to deviate more substantially from its actual values (as described in \cref{sec:mol}). \Cref{fig:sym_dur} also shows that the leakage flux has a substantial impact on the symport duration, $\tesymNoi - \tssymNoi$, i.e., the responsiveness of the \ac{ND} to external stimuli, while the symport rate constant determines the strength of the chemical signal, i.e., $\Csout{t}$. These observations can guide the choice of co-transporters for the release module and the choice of the vesicle membrane.

\begin{figure}
\centering
\begin{minipage}[t]{.47\textwidth}
  \centering
    \centering
    \includegraphics[width=0.99\textwidth]{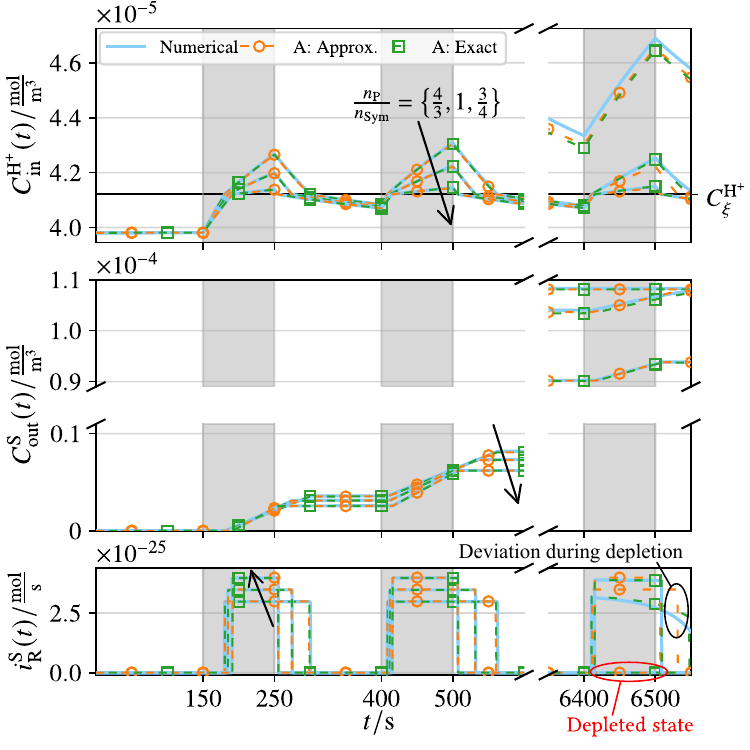}
    \caption{Intravesicular $\Hplus$ concentration, extravesicular $\Sub$ concentration, and the outflux of $\Sub$ caused by the symporters over multiple illumination cycles for varying protein ratios $\np / \nsym$ and $\CsinZERO = \SI{3.14}{\mol\per\cubic\meter}$. The arrows indicate decreasing $\np / \nsym$. Shaded gray areas indicate times during which $l(t) = 1$.}
    \label{fig:ratios}
\end{minipage}
\hspace{0.2em}
\begin{minipage}[t]{.47\textwidth}
    \centering
    \includegraphics[width=0.99\textwidth]{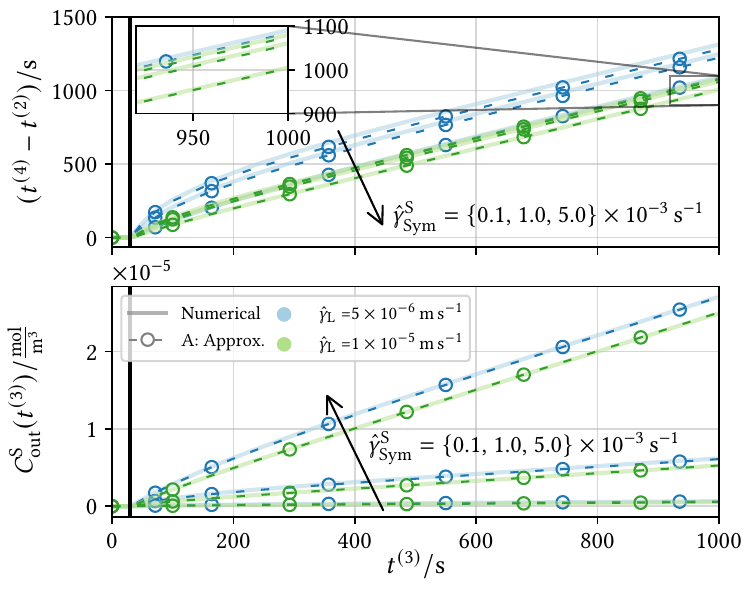}
    \caption{Top: Symport duration over the illumination duration ($\tspNoi = \SI{0}{\s}$) for different symport and leakage rates. Bottom: Corresponding $\Csout{t}$ during the illumination period. The vertical black line marks the minimum illumination time needed for symporter activity.}
    \label{fig:sym_dur}
\end{minipage}%
\end{figure}

\subsection{Multiple Vesicle System}
\label{ssec:mvs_results}

\noindent Now, we consider an \ac{MVS} consisting of $\nves$ \acp{ND}. Therefore, parameters $\din$, $\ghlHat$, $\np$, and $\nsym$ are randomly distributed as described in \cref{sec:mvs_analysis}, where the parameter values of the distributions are given in the right-most columns of \cref{tab:params}. \Cref{fig:din_exp_dist_fit} shows the experimentally measured distribution of the total vesicle diameter, i.e., $\dinM{m} + 2\dmem$, and the corresponding fitted log-normal distribution. It is apparent that the log-normal distribution (see \eqref{eq:din_rand} in \cref{sec:mvs_analysis}) with parameters $l_{\mathrm{ves}} = \SI{39.74}{\nano\meter}, \mu_{\mathrm{ves}} = 4.16\log(\mathrm{nm}),$ and $\sigma_{\mathrm{ves}} = 0.62$ is an adequate fit for the experimental data\footnote{Polymersomes were formed from the amphiphilic triblock copolymer poly(2-methyloxazoline)\textsubscript{15}-poly(dimethylsiloxane)\textsubscript{68}-poly(2-methyloxazoline)\textsubscript{15} (PMOXA\textsubscript{15}–PDMS\textsubscript{68}–PMOXA\textsubscript{15}). The production process was carried out in the miniaturized stirred-tank reactor system bioREACTOR 48 (2mag, Munich, Germany) with a working volume of \SI{12}{\milli\liter} \cite{Poschenrieder2016}. A volume of 600\,µL of a 20\,\% (weight/volume) ethanolic polymer solution was added to the buffer (\SI{20}{\mol\per\cubic\meter} potassium phosphate buffer, \SI{150}{\mol\per\cubic\meter} potassium chloride, pH 8) in a 1:20 ratio (resulting in a 1\,\% (weight/volume) polymer concentration) and then stirred at 4000\,rpm and \SI{30}{\degreeCelsius} with an S-shaped stirrer. The vesicle size distribution was determined by dynamic light scattering (ZetaSizer Nano series, Malvern Panalytics GmbH, Kassel, Germany). The fabrication process was terminated after one hour, when a polydispersity index of less than 0.2 was achieved.}. Interestingly, \cref{fig:din_exp_dist_fit} shows that there is a physical minimum of the total vesicle diameter of approximately $\SI{40}{\nano\meter}$ for the given data set. This minimum is likely governed by the vesicle production process, membrane thickness, and the interactions between the membrane molecules. Such a clear physical limit does not seem to exist for the maximum vesicle diameter, as the experimental data shows a heavy tail towards large vesicle diameters, leading to a skewed distribution. This behavior is adequately captured by a shifted log-normal distribution, which has a lower but no upper bound. \Cref{fig:gl_trunc_lognorm} shows the histogram of values for $\ghlHat$ sampled from a truncated log-normal distribution (see \eqref{eq:ghlM_dist}). Similarly, \cref{fig:np_binom_fit} shows the Binomial distribution and sampled random variables for the number of pumps $\np$ for a vesicle with $\din = \SI{117.67}{\nano\meter}$ (mean diameter from \cref{fig:din_exp_dist_fit}). 

\noindent For the following results, we distinguish between the behavior of different \acp{SVS}, i.e., \acp{ND}, and the behavior of several \ac{MVS}, i.e., experiments. \Cref{fig:interandintra} shows the variation between \emph{different \acp{ND}} within one experiment (top panels) and between the \acp{MVS} of \emph{different experiments} (bottom panels). The inter-vesicle and inter-experiment means are shown as solid and dashed lines in the top and bottom panels of \cref{fig:interandintra}, respectively. Similarly, the shaded area around the means shows one standard deviation for the inter-vesicle and -experiment results. The number of experiments was $\nEx = 10$ while the number of individually modeled vesicles per experiment was $\nMod = 100$. This means that, while we assume that our \ac{MVS} consists of $\nves = 10^{11}$ vesicles and divide $\VOUTtot$ accordingly, we only model a small fraction of these vesicles ($\nMod \ll \nves$) to limit computational complexity. The results from \ac{FDM} (blue curves) in \cref{fig:interandintra} were calculated using the total extravesicular $\Hplus$ concentration, $\ChoutTot{t}$, and serve as a baseline. According to (A4), the analytical approximation assumes total independence of all subsystems and therefore refrains from computing $\ChoutTot{t}$. Additionally, \cref{fig:interandintra} shows the expected behavior of an \ac{SVS} based on the mean parameter values for $\np, \nsym, \ghlHat$, and $\din$ of the \ac{MVS} (black dashed line).
From the top panels of \cref{fig:interandintra}, it can be observed that the inter-vesicle variance is substantial, as reflected by the large standard deviation of $\Chin{t}$ and $\Csout{t}$. This means that the behavior of individual \acp{ND} within one \ac{MVS} is diverse and inhomogeneous. Additionally, the mean responses of $\Chin{t}$ and $\Csout{t}$ are not equal to the response of a vesicle with average parameters. This observation suggests that careful consideration of the parameter distribution is required to adequately model the behavior of the vesicles that occur in chemical realizations. Similarly, the mean behavior of the \acp{MVS} (bottom panel of \cref{fig:interandintra}) also deviates from the corresponding mean \ac{SVS}, as expected from \cref{theo:t1}. In fact, as the influence of the random parameter $\dinM{m}$ results in squared and cubic changes in the intravesicular volume and the transport rates, the mean $\Csout{t}$ of an \ac{SVS} clearly lies below the inter-experiment mean of the \acp{MVS} (bottom right panel in \cref{fig:interandintra}). Notably, the inter-experiment variance is much less substantial than the inter-vesicle variance. As stated in \cref{theo:t2}, the inter-experiment variance tends to 0 as the number of vesicles per experiment tends to infinity. As the number of vesicles in experiments is expected to be around $10^{11} \si{\per \milli \liter}$, the inter-experiment variance is likely negligible in realistic setups.
We point out that, in \cref{fig:interandintra}, the differences between the \ac{FDM} baseline and our analytical approximation are similarly small as for the \ac{SVS}, where they result from the constant buffer attenuation factor. Consequently, for the \ac{MVS}, the deviations from the baseline in \cref{fig:interandintra} are also expected to (mostly) originate from the constant buffer attenuation rather than the violation of (A4). This is in line with our expectation that considering \ac{SVS} independence for the analytical approximation does not cause a substantial error when $\VoutM{m} \gg \VinM{m}$, as would be the case for most experimental settings. If $\VoutM{m} \gg \VinM{m}$, the fluxes from and to $\VOUTM{m}$ lead to small concentration changes in $\VoutM{m}$ due to its large volume size. Generally, the analysis of \acp{MVS} clearly indicates that considering multiple \acp{ND} within realistic \acp{TX} is crucial as their behavior deviates from the expected behavior of a single \ac{ND}. Nevertheless, the behavior of the overall \ac{MVS} is consistent among different experiments. This indicates that future wet lab implementations of the system will be reproducible and, thus, reliable.

\begin{figure}
\centering
\begin{minipage}[t]{.32\textwidth}
    \centering
    \includegraphics[width=0.99\textwidth]{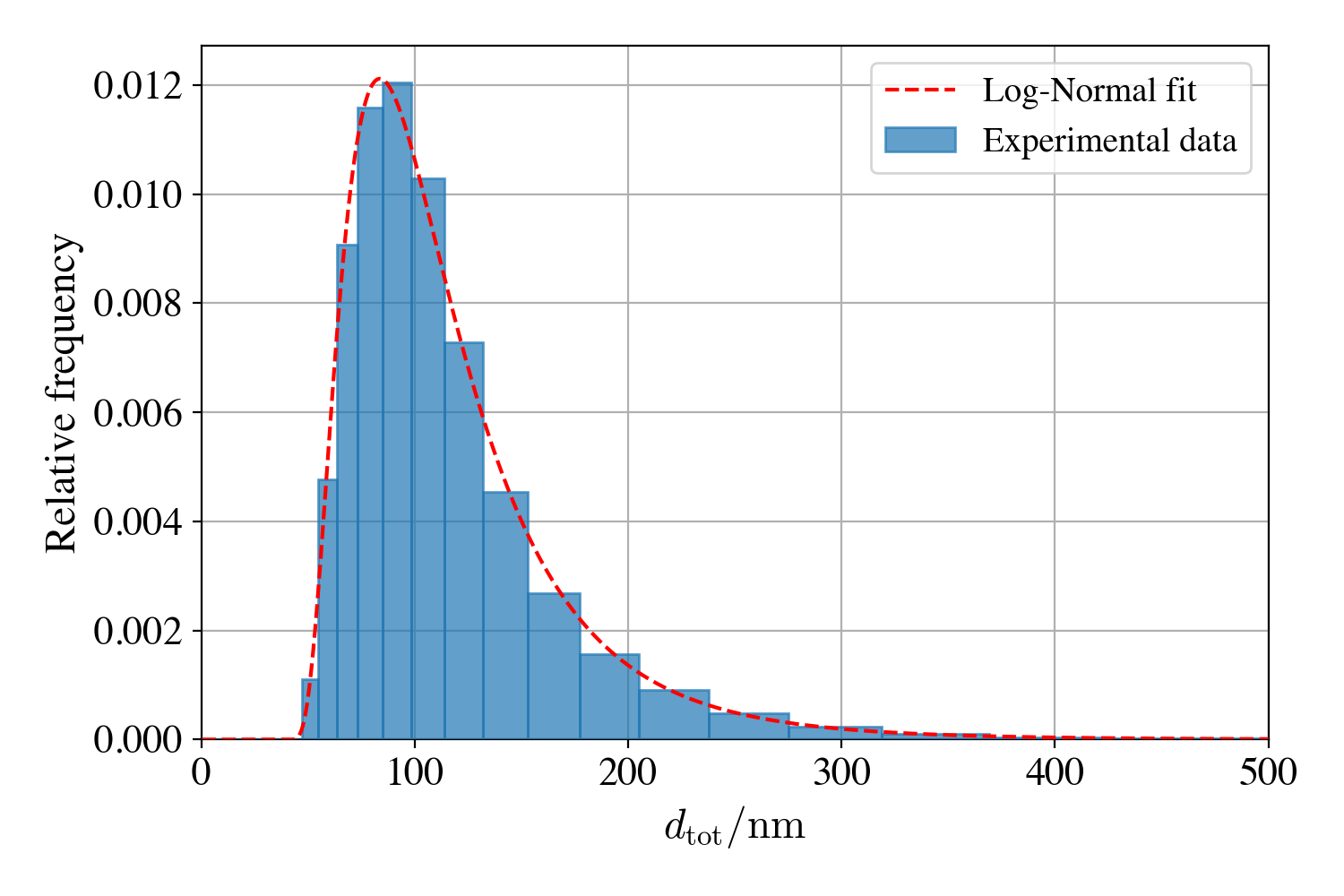}
    \caption{Experimentally measured $\dinM{m} + 2 \dmem$ for PMOXA\textsubscript{15}-PDMS\textsubscript{68}-PMOXA\textsubscript{15} \mbox{triblock co-polymersomes} synthesized in dichloroethan (blue bars) and a log-normal fit (red curve) with parameters $l_{\mathrm{ves}} = \SI{39.74}{\nano\meter}, \mu_{\mathrm{ves}} = 4.16\log(\mathrm{nm}),$ and $\sigma_{\mathrm{ves}} = 0.62$.}
    \label{fig:din_exp_dist_fit}
\end{minipage}
\hspace{0.1em}
\begin{minipage}[t]{.32\textwidth}
    \centering
    \includegraphics[width=0.99\textwidth]{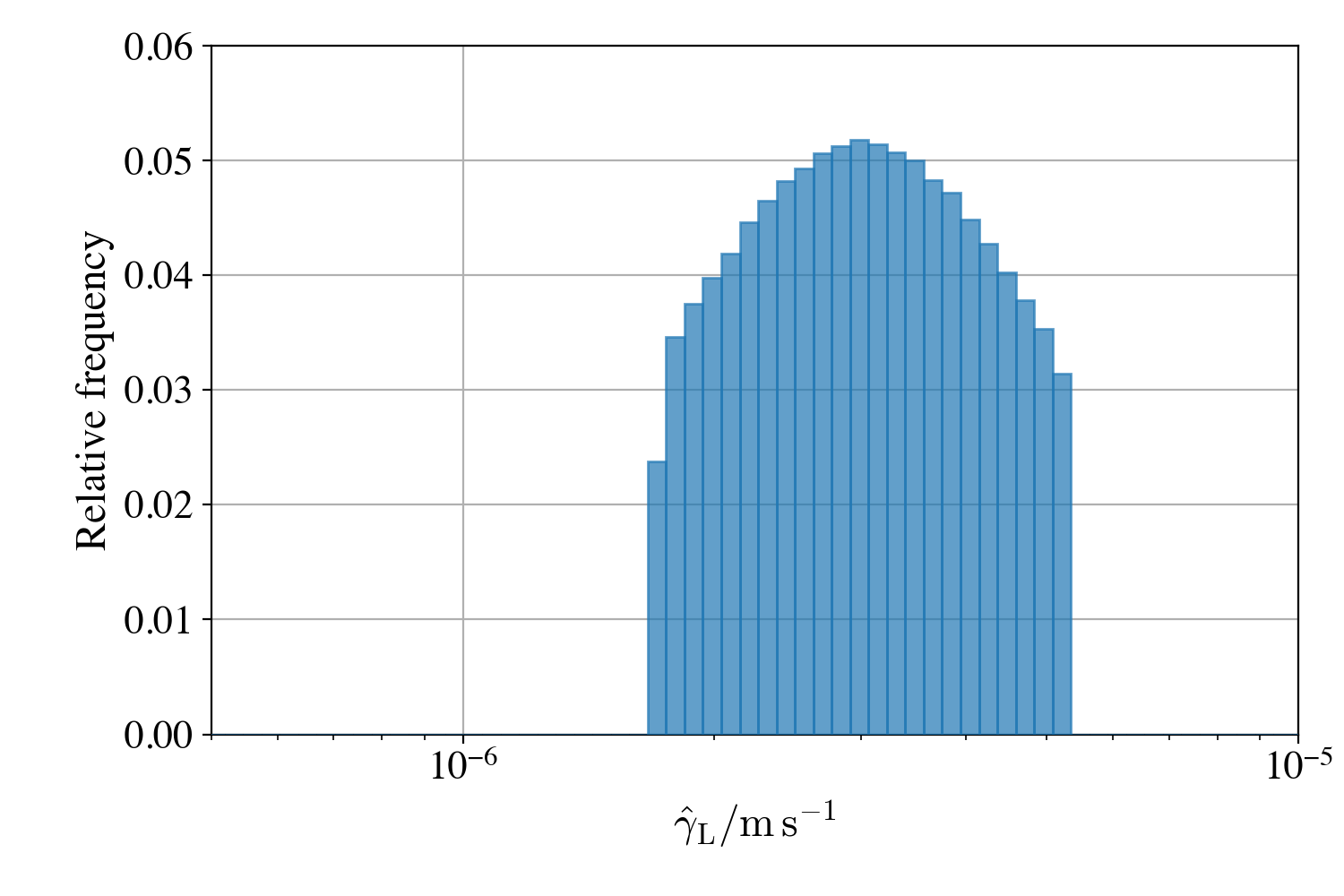}
    \caption{Histogram of the varied $\Hplus$ permeability values over the membrane using $l_{\mathrm{L}} = -5.77, \; u_{\mathrm{L}} = -5.27,  \;\mu_{\mathrm{L}} = -5.52, \; \sigma_{\mathrm{L}} = 0.25$.}
    \label{fig:gl_trunc_lognorm}
\end{minipage}
\hspace{0.1em}
\begin{minipage}[t]{.32\textwidth}
    \centering
    \includegraphics[width=0.99\textwidth]{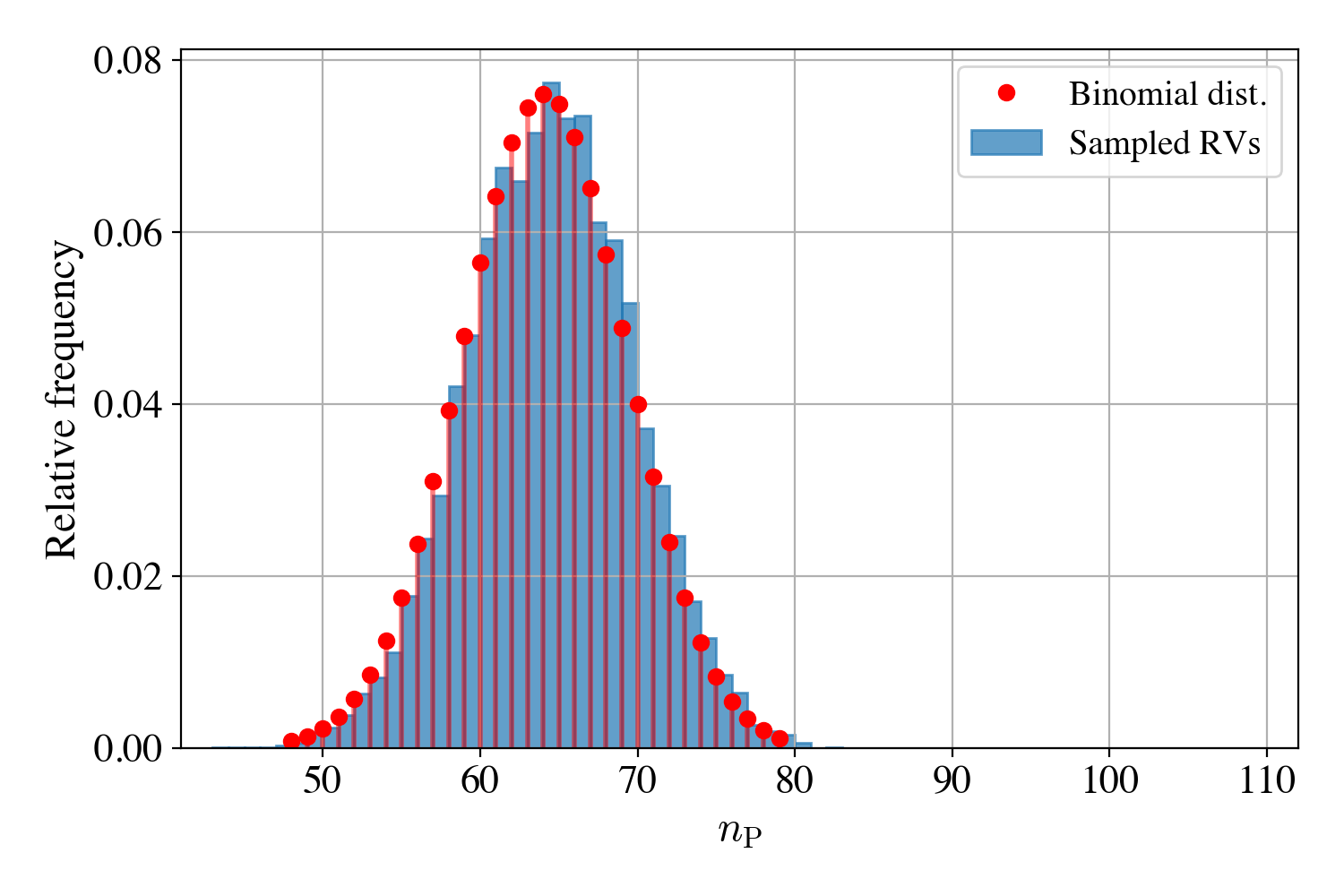}
    \caption{Binomial distribution (red) and histogram of generated Binomially distributed random variables (blue bars) for the number of pumps, $\np$, in the vesicle membrane of a vesicle with $\din = \SI{117.67}{\nano\meter}$, which corresponds to the mean of the log-normal distribution (in \cref{fig:din_exp_dist_fit}), and $\ntot = 112$ (for the given $\din$).}
    \label{fig:np_binom_fit}
\end{minipage}
\end{figure}

\begin{figure}
\centering
\includegraphics[width=0.8\textwidth]{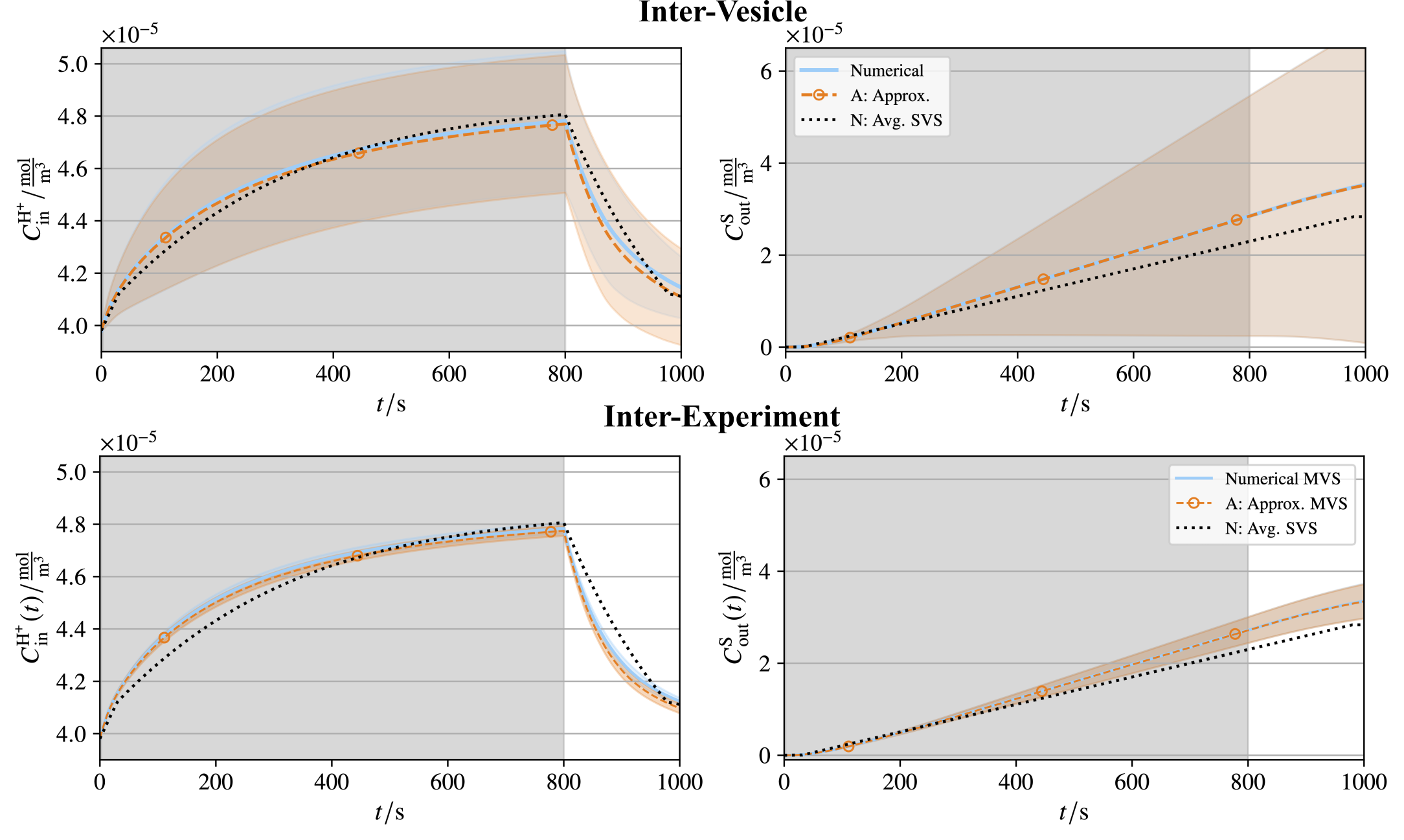}
\vspace{-2em}
\caption{Top: Inter-vesicle intravesicular $\Hplus$ and extravesicular $\Sub$ concentration mean (solid or dashed line) $\pm$ one standard deviation (shaded area). Bottom: Inter-experiment $\Hplus$ and extravesicular $\Sub$ concentration mean (solid line) $\pm$ one standard deviation (shaded area). Here, $\nMod = 100$ vesicles per simulation were modeled individually and $\nEx = 10$ simulations were averaged for the bottom row. The shaded area corresponds to the illumination time.}
\label{fig:interandintra}
\end{figure}

\subsection{Varying Mean Values for Parameters in Different Experimental Batches}

\noindent In practice, when multiple experiments are conducted using different batches of proteins, buffers, and polymers, the distributions of the parameters may vary between experiments. To account for such effects, we conduct a sensitivity analysis to investigate the influence of practically relevant parameters, which should be carefully tuned in order to reduce inter-experiment variance. Such analyses are beneficial for future experimental realizations of the system where the variance between batches has to be taken into account. For the analysis, we focus on the mean inner vesicle diameter $\dinMean$ and the mean leakage velocity over the membrane $\ghlHatMean$. These values may vary depending on the production process, which cannot be controlled perfectly. For the following results, $\nEx = 10$ experiments were conducted per setting, each consisting of $\nves = 100$ individually modeled vesicles.

\paragraph{Varying Mean Inner Vesicle Diameters}

\begin{figure}
    \centering
    \includegraphics[width=0.8\textwidth]{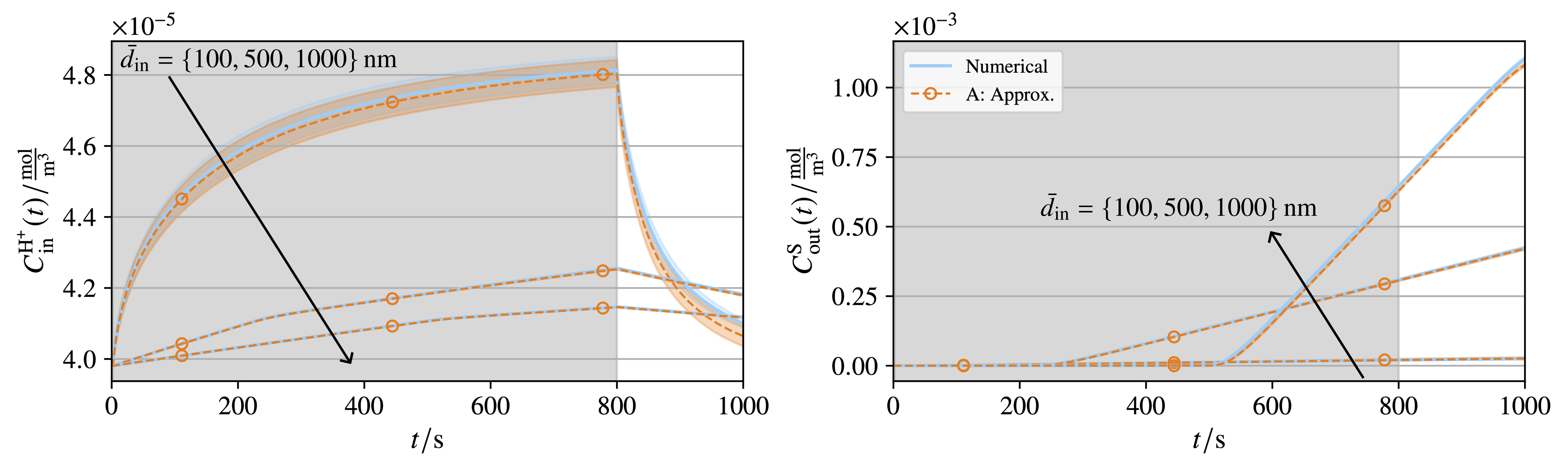}
    \vspace{-1.5em}
    \caption{Mean and one standard deviation of the intravesicular $\Hplus$ (left) and extravesicular $\Sub$ concentrations (right) for varying mean inner vesicle diameters. The chosen mean vesicle diameters were $\dinMean \in \{100, 500, 1000\} \; \si{\nano\meter}$. The shaded area corresponds to the illumination time.}
    \label{fig:var_din}
\end{figure}

\begin{figure}
    \centering
    \includegraphics[width=0.8\textwidth]{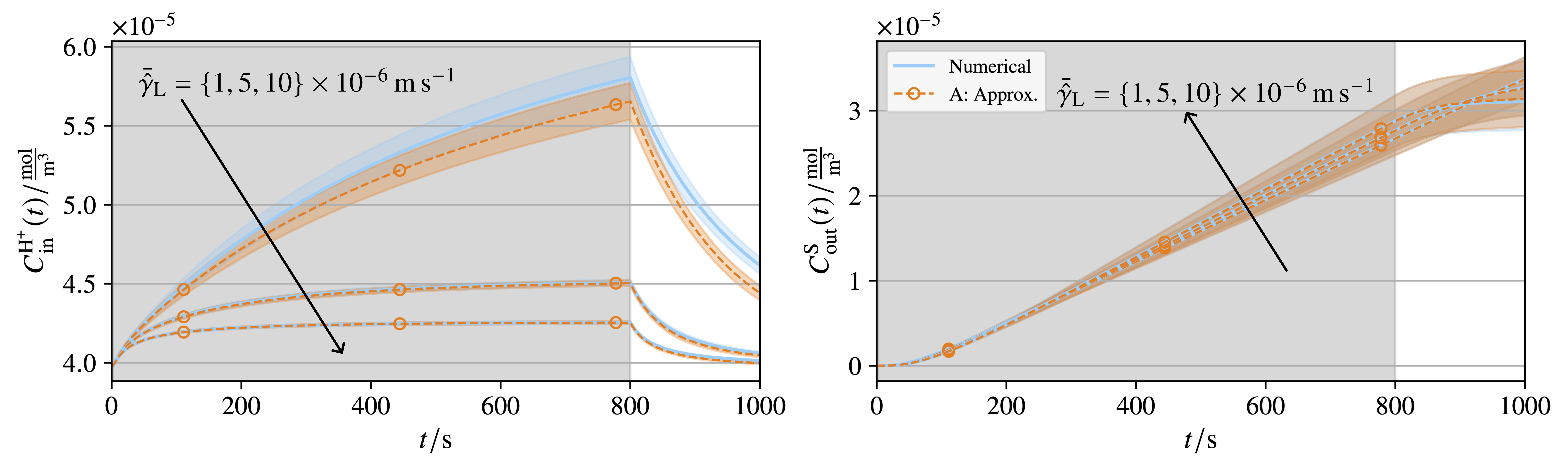}
    \vspace{-1.5em}
    \caption{Mean and one standard deviation of the intravesicular $\Hplus$ (left) and extravesicular $\Sub$ concentrations (right) for varying mean $\Hplus$ permeability coefficient, $\ghlHatMean$. The chosen permeability coefficients were $\ghlHatMean \in \{1, 5, 10\} \times 10^{-6} \; \si{\meter\per\s}$. The shaded area corresponds to the illumination time.}
    \label{fig:var_gl}
\end{figure}

\Cref{fig:var_din} shows the impact of different mean vesicle diameters $\dinMean$ on the inter-experiment variance for $\nEx = 10$ experiments, while the variance of its distribution remains unchanged. As $\dinMean$ directly influences $\Vin$, it also scales $\Chin{t}$ (see left panel in \cref{fig:var_din}). For example, larger vesicles exhibit a lower intravesicular concentration change upon illumination. Therefore, we also observe a delayed start of symport for these vesicles (see right panel in \cref{fig:var_din}). Similarly, as $\nsym$ increases with $\din$ and directly scales the symport rate, larger vesicles have a higher symport rate, which is reflected in a higher incline of $\Csout{t}$ during symport. Therefore, a quicker $\Sub$ release during symport by large vesicles comes at the trade-off of lower system responsivity. The leakage and symport rate are proportional to the outer vesicle area, which is linearly dependent on $\din^{2}$. However, as $\Vin$ grows linearly in $\din^{3}$, the effect of increasing $\Vin$ dominates the effect of increasing leakage and symport rate, as can be observed from the curves for $\Chin{t}$ in \cref{fig:var_din}. We observe that the absolute inter-experiment variance of $\Chin{t}$ is largest for small mean vesicle diameters in \cref{fig:var_din}. The cause of this is also the cubic dependence of $\Vin$ on $\din$. In conclusion, \cref{fig:var_din} indicates that changes in $\dinMean$ greatly impact the release of $\Sub$. Thus, parameter $\dinMean$ should be tuned carefully to obtain the desired system behavior.

\paragraph{Varying Mean Permeability Coefficient}

\noindent \Cref{fig:var_gl} shows a similar analysis for different mean $\Hplus$ permeability coefficients $\ghlHatMean$ of the membrane. Similar to \cref{fig:var_din}, the results were obtained using \ac{MVS} and averaging over $\nEx = 10$ simulations. For this analysis, the values of $\ghlHatMean$ were varied between $10^{-6} \si{\meter\per\s}$ and $10^{-5} \si{\meter\per\s}$. The left panel shows that there is a correlation between low permeability coefficients and high $\Chin{t}$ during illumination. Naturally, this also implies that our analytical solution performs worse for lower permeabilities as the buffering effect is modeled inadequately (refer to \cref{sec:mol}). Interestingly, for the chosen threshold $\thresh$, $\Csout{t}$ does not depend as strongly on $\ghlHatMean$ (see right panel in \cref{fig:var_gl}). However, the differences between the symport end times $\tesym{1}$ after $\tep{1} = \SI{800}{\s}$ can be seen clearly: For $\ghlHatMean = 10^{-5}\si{\meter\per\s}$, $\Csout{t}$ quickly stagnates, indicating that the symport ends shortly after the end of illumination. This is directly related to the lower $\Chin{t}$ for the respective permeability coefficient. A lower permeability coefficient and, thus, higher $\Chin{t}$ causes the symporters to remain active after illumination ends, which is why we cannot observe any stagnation in $\Csout{t}$ for $\ghlHatMean = 10^{-6} \si{\m \per \s}$. We observe that the changes in $\Csout{t}$ are less substantial when varying $\ghlHatMean$ than when varying $\dinMean$. Note that the y-axes in \cref{fig:var_din,fig:var_gl} differ by a magnitude of 100. This is because, unlike $\dinMean$, $\ghlHatMean$ scales only some transport rate (it does not scale the $\Sub$ transport rate of the release modules, for instance). Therefore, if the envisioned application is sensitive to the amount of released $\Sub$, it is important to keep $\dinMean$ constant among experimental realizations, while small deviations in $\ghlHatMean$ are likely less impactful.

\section{Conclusions}
\label{sec:conclusion}

\noindent In this paper, we introduced a new \ac{ND} design that can be used as an optically controllable \ac{TX} in synthetic \ac{MC} systems for the release of \acp{SM} via cooperating transmembrane proteins. The proposed modular design comprises an energizing module and a release module powered by the energizing module and has the potential to be useful in various future healthcare and industrial applications of \ac{MC} as it supports the controlled release of a variety of different \acp{SM}.
We proposed two analytical expressions for the concentrations of the involved molecules to describe the dynamics of the envisioned \ac{ND} and successfully verified them by comparison to a numerical baseline. Our model adequately captures real-world phenomena such as the presence of a pH buffer and \ac{SM} depletion in the vesicle. Our results demonstrate that the choice of appropriate system parameters, such as the ratio of pumps and co-transporters or the buffer molarity, is crucial for ensuring the feasibility of the desired system behavior. Consequently, the proposed analytical solutions can guide the design of future experiments and thereby accelerate the development time of the envisioned \ac{ND} by offering the possibility for \textit{in silico} optimization of the system parameters. 
Additionally, this work is a crucial step towards modeling experimentally feasible \ac{MC} \acp{TX} by accounting for the fact that practically implementable \ac{TX} systems consist of large groups of \acp{ND}. Moreover, we considered realistic random distributions for various system parameters caused by the chemical production of vesicular \acp{ND} and modeled \ac{ND} diversity.

\noindent An interesting topic for future work is a more detailed investigation of the behavior of the components of the system, e.g., by modeling the dependency that the applied light intensity has on the molecule transport rates and by developing more sophisticated models for the symport kinetics.

\appendix

\subsection{Proof of Theorem 1}
\label{sec:proof_mvs}

\noindent In order to prove \cref{theo:t1}, we need to show that one of the functions describing the $\Sub$ and $\I$ concentrations are not equal for the mean of the \ac{MVS} with varying parameters $\din$, $\np$, $\nsym$, and $\ghlHat$ and the corresponding \ac{SVS} with the mean parameters $\dinMean$, $\npMean$, $\nsymMean$, and $\ghlHatMean$. 
To this end, we use Jensen's inequality for convex function, $\phi: \mathbb{R} \rightarrow \mathbb{R}$ \cite[Sec. 2.7]{MacKay2002}:

\begin{equation}
    \phi(\mathcal{E}\{X\}) \le \mathcal{E}\{\phi(X)\},
    \label{eq:ji}
\end{equation}

\noindent where $X$ and $\mathcal{E}\{\cdot\}$ are a random variable and the expectation operation, respectively. The inequality in \eqref{eq:ji} becomes an equality iff $X$ is constant, i.e., deterministic, or iff the function $\phi$ is affine, i.e., linear \cite[Sec. 8.4]{Lin2011}. 
Let us consider \eqref{eq:csin_aci} and random variable $X := \nsym$. If we rewrite $\Csin{t}$ for the case of symporter activity in terms of $X$ and some constants $\beta_i$ with $i \in \{1, 2, 3, 4\}$, we obtain:

\begin{equation}
    \Csin{t, X} = \beta_1 \lamW{\beta_2 \mathrm{exp}(\beta_3 - \beta_4 X (t - \tSecStart{t}))}.
\end{equation}

\noindent Now, let us take the second derivative of $\Csin{t, X}$ with respect to the random variable $X$. We consequently obtain:

\begin{equation}
    C^{\mathrm{S} ''}_{\mathrm{in}}(t,X) = \beta_1(\beta_4(t - \tSecStart{t}))^2 \frac{\lamW{\beta_2 \mathrm{exp}(\beta_3 - \beta_4 X (t-\tSecStart{t}))}}{(1+ \lamW{\beta_2 \mathrm{exp}(\beta_3 - \beta_4 X (t-\tSecStart{t}))})^{3}}.
\end{equation}

\noindent This equation holds for all $t \ge \tSecStart{t}$. All $\beta_i$ are strictly positive. Additionally, the Lambert W--function yields positive values for positive arguments. Consequently, we observe that $ C^{\mathrm{S} ''}_{\mathrm{in}}(t,X) \ge 0$, i.e., $\Csin{t,X}$ is convex. Therefore, we can use Jensen's inequality to relate the function value at the expectation of $X$ to the expectation of the function evaluated at $X$. As $X$ is a (positive) non-deterministic random variable, we conclude that the following holds:

\begin{equation}
    \mathcal{E}\{\Csin{X}\} \neq \Csin{\mathcal{E}\{X\}}.
\end{equation}

\noindent This suffices to conclude that $\Csin{t}$ in an \ac{SVS} with average parameters does not correspond to the signal in the overall \ac{MVS}. Therefore, we have shown that the \ac{MVS} behavior cannot be captured by the models derived for the \ac{SVS} with average parameters if the symporters are activated at some point. Note that similar results can be obtained when considering the other random parameters, but the derivations are omitted here. \hfill $\square$

\vspace{-2em}
\bibliographystyle{IEEEtran}
\bibliography{literature}

\section*{Acknowledgment}

This work was funded by the Deutsche Forschungsgemeinschaft (DFG, German Research Foundation), GRK 2950, Project-ID 509922606.

\end{document}